   \newcommand\SkipToFmtEnd{}%
   \newcommand\EndFmtInput{}%
   \long\def\SkipToFmtEnd#1\EndFmtInput{}%
\newcommand\ReadOnlyOnce[1]{\@ifundefined{#1}{\@namedef{#1}{}}\SkipToFmtEnd}
\DeclareFontFamily{OT1}{cmtex}{}
\DeclareFontShape{OT1}{cmtex}{m}{n}
  {<5><6><7><8>cmtex8
   <9>cmtex9
   <10><10.95><12><14.4><17.28><20.74><24.88>cmtex10}{}
\DeclareFontShape{OT1}{cmtex}{m}{it}
  {<-> ssub * cmtt/m/it}{}
\DeclareFontShape{OT1}{cmtt}{bx}{n}
  {<5><6><7><8>cmtt8
   <9>cmbtt9
   <10><10.95><12><14.4><17.28><20.74><24.88>cmbtt10}{}
\DeclareFontShape{OT1}{cmtex}{bx}{n}
  {<-> ssub * cmtt/bx/n}{}
\newcommand{\Conid}[1]{\mathit{#1}}
\newcommand{\Varid}[1]{\mathit{#1}}
\newcommand{\anonymous}{\kern0.06em \vbox{\hrule\@width.5em}}
\newcommand{\plus}{\mathbin{+\!\!\!+}}
\newcommand{\bind}{\mathbin{>\!\!\!>\mkern-6.7mu=}}
\newcommand{\sequ}{\mathbin{>\!\!\!>}}
\renewcommand{\geq}{\geqslant}
\newdimen\mathindent\mathindent\leftmargini}%
\def\resethooks{%
  \global\let\SaveRestoreHook\empty
  \global\let\ColumnHook\empty}
\newcommand*{\savecolumns}[1][default]%
  {\g@addto@macro\SaveRestoreHook{\savecolumns[#1]}}
\newcommand*{\restorecolumns}[1][default]%
  {\g@addto@macro\SaveRestoreHook{\restorecolumns[#1]}}
\newcommand*{\aligncolumn}[2]%
  {\g@addto@macro\ColumnHook{\column{#1}{#2}}}
\newcommand{\onelinecommentchars}{\quad-{}- }
\newcommand{\commentbeginchars}{\enskip\{-}
\newcommand{\commentendchars}{-\}\enskip}
\newcommand{\visiblecomments}{%
  \let\onelinecomment=\onelinecommentchars
  \let\commentbegin=\commentbeginchars
  \let\commentend=\commentendchars}
\newcommand{\invisiblecomments}{%
  \let\onelinecomment=\empty
  \let\commentbegin=\empty
  \let\commentend=\empty}
\newlength{\blanklineskip}
\newcommand{\hsindent}[1]{\quad}
\let\hspre\empty
\let\hspost\empty
\newcommand{\hsnewpar}[1]%
  {{\parskip=0pt\parindent=0pt\par\vskip #1\noindent}}
\newcommand{\hscodestyle}{}
\newcommand{\sethscode}[1]%
  {\expandafter\let\expandafter\hscode\csname #1\endcsname
   \expandafter\let\expandafter\endhscode\csname end#1\endcsname}
   \let\hspre\(\let\hspost\)%
   \let\hspre\(\let\hspost\)%
\newcommand{\plainhs}{\sethscode{plainhscode}}
\def\codeframewidth{\arrayrulewidth}
   \let\endoflinesave=\\
   \framedhslinecorrect\endoflinesave{.5ex}\hline
\newcommand{\framedhslinecorrect}[2]%
  {#1[#2]}
\def\column##1##2{}%
   \newcommand\>[1][]{}\newcommand\<[1][]{}\newcommand\\[1][]{}%
   \def\fromto##1##2##3{##3}%
\let\orighscode=\hscode
   \let\origendhscode=\endhscode
   \def\endhscode{\def\hscode{\endgroup\def\@currenvir{hscode}\\}\begingroup}
\def\hscode{\endgroup\def\@currenvir{hscode}}}%
   \global\let\hscode=\orighscode
   \global\let\endhscode=\origendhscode}%
\let\HaskellResetHook\empty
\newcommand*{\AtHaskellReset}[1]{%
  \g@addto@macro\HaskellResetHook{#1}}
\newcommand*{\HaskellReset}{\HaskellResetHook}
\newcommand\hsforall{\global\let\hsdot=\hsperiodonce}
\newcommand*\hsperiodonce[2]{#2\global\let\hsdot=\hscompose}
\newcommand*\hscompose[2]{#1}
\newtheorem{theorem}{Theorem}[section]
\newtheorem{prop}[theorem]{Proposition}
\newtheorem*{proof*}{Proof}
\newtheorem*{sk*}{Sketch}
\newtheorem{definition}[theorem]{Definition}
\newcommand{\CC}{\mathcal{C}}
\newcommand{\DD}{\mathcal{D}}
\newlength{\nl}
\title{On Structuring Functional Programs with Monoidal Profunctors}
\author{Alexandre Garcia de Oliveira
\institute{IME-USP\\ São Paulo, Brazil}
\institute{Instituto de Matemática e Estatística\\
Univerisidade de São Paulo\\
São Paulo, Brazil}
\email{alexgrcol (at) hotmail.com}
\and
Mauro Jaskelioff
\institute{CIFASIS-CONICET\\
Rosario, Argentina}
\institute{FCEIA\\
    Universidad Nacional de Rosario\\
    Argentina}
\email{jaskelioff(at)cifasis-conicet.gov.ar}
\and
Ana Cristina Vieira de Melo
\institute{IME-USP\\
São Paulo, Brazil}
\email{acvm(at)ime.usp.br}
}
\begin{document}

\maketitle

\begin{abstract}
 We study monoidal profunctors as a tool to reason and structure pure functional programs both from a categorical perspective and as a Haskell implementation. From the categorical point of view we approach them as monoids in a certain monoidal category of profunctors. We study properties of this monoidal category and construct and implement the free monoidal profunctor. We study the relationship of the monoidal construction to optics, and introduce a promising generalization of the implementation which we illustrate by introducing effectful monoidal profunctors. 
\end{abstract}

\section{Introduction}
Functors, applicative functors~\cite{Mcbride}, monads~\cite{Moggi,Wadler92a}, profunctors, and arrows~\cite{Hughes2004, Hasuo} are by now part of the vocabulary of 
the programmer who writes mathematically structured programs. In order to understand and develop these structures both the categorical view 
and the programming view have been helpful. For example, Lindley et al.~\cite{Lindwad} compare these structures from the point of view of (typed) programming languages, 
and~Rivas and Jaskelioff~\cite{Jask2017} compare them from the point of view of monoidal categories. 
In this last work, both monads and applicative functors are seen as monoids in a monoidal category of endofunctors. 
Monads use functor composition as tensor, whereas applicative functors use the Day convolution as tensor.
Likewise, arrows can be seen as monoids in a monoidal category of (strong) profunctors.

However, given that the Day convolution can also be a tensor of profunctors there is another structure yet to be studied.
In this work we study monoidal profunctors, a structure obtained from considering monoids in a category of profunctors 
taking the Day convolution as tensor, hence filling the gap in the following table:

\begin{table}[h!]
  \begin{center}
    \begin{tabular}{|c|c|c|}
      \hline
      functor & applicative & monad\\
      \hline
      profunctor & ???? & arrow \\
      \hline
      \end{tabular}
      \medskip
      \caption{Structure relations}
      \label{tab:table1}
  \end{center}
\end{table}






Monoidal profunctors are a categorical structure with two components: an
identity computation and a generic parallel composition.

Therefore, with this paper we aim to gather the knowledge about monoidal
profunctors and study their application in the context of functional programming.

After some mathematical background (Section~\ref{sec:mathematicalbackground}), we introduce monoidal profunctors as monoids in a monoidal category 
and study some of their properties (Section~\ref{sec:monoidalprofunctors}). Then, in Section~\ref{sec:haskellimplementation}, we implement these ideas in Haskell and provide some instances.
We show the free monoidal profunctor and a Haskell implementation in Section~\ref{sec:freemonoidalprofunctor}.
In Section~\ref{sec:monocles}, we present an application of monoidal profunctors to optics~\cite{bartosz}
and observe connections with well-known structures such as traversals and grates~\cite{Wu, connor}. 
In Section~\ref{sec:effectfulmonoidprofunctor}, we generalize the implementation to other categories and illustrate its application to effectful monoidal profunctors.


\section{Mathematical background}
\label{sec:mathematicalbackground}
\subsection{Monoidal Categories and Monoids}
A \emph{monoidal category} \cite{Jask2017} gives us a minimal framework for defining the categorical version of a monoid.  

\begin{definition}
\label{monocat}
  A \emph{monoidal category} is a sextuple ($\mathcal{C}$, $\otimes$, $I$,
  $\alpha$, $\rho$, $\lambda$) where 
  \begin{itemize}
    \item $\mathcal{C}$ is a category;
    \item $\otimes:\mathcal{C} \times \mathcal{C} \rightarrow \mathcal{C}$ is
  a bifunctor;
    \item $I$ is an object called unit;
    \item $\rho_A : A \otimes I\rightarrow A$, $\lambda_A : I \otimes A \rightarrow A$ and $\alpha_{ABC} : (A \otimes B) \otimes C \rightarrow A \otimes (B \otimes C)$ are three natural isomorphisms.
  \end{itemize}
\end{definition}

If the isomorphisms $\rho$, $\lambda$ and $\alpha$ are identities then the monoidal category is called \emph{strict}, if there is a natural isomorphism
$\gamma_{AB} : A \otimes B \rightarrow B \otimes A$ the monoidal category is called \emph{symmetric}.

A monoidal category is \emph{closed} if there is an additional functor, called the internal hom, 
$\Rightarrow : \mathcal{C}^{op} \times \mathcal{C} \rightarrow Set$ such that 
$\CC (A \otimes B, C) \cong \CC(A,B \Rightarrow C)$, natural in $A$, $B$ and $C$, objects of $\mathcal{C}$. 
The witnesses of this isomorphism are called currying and uncurrying. In $Set$, with $\otimes = \times$, $A \Rightarrow B$ is just the hom-set $A \rightarrow B$.


\begin{definition}
A \emph{monoid} in a monoidal category $\mathcal{C}$ is the tuple $(M,e,m)$ where 
$M$ is an object of $\mathcal{C}$, $e:I \rightarrow M$ is the unit morphism
and $m: M \otimes M \rightarrow M$ is the multiplication morphism, satisfying
\begin{enumerate}
\item Right unit: $m \circ (id \otimes e) = \rho_{MMM}$
\item Left unit: $m \circ (e \otimes id) = \lambda_{MMM}$
\item Associativity: $m \circ (m \otimes id) = m \circ (id \otimes m) \circ \alpha_{MMM} $
\end{enumerate}
\end{definition}

\subsection{Profunctors}

A \emph{profunctor} generalizes the notions of function relation and bimodule \cite{operads}.

\begin{definition}
Given two categories $\mathcal{C}$ and $\mathcal{D}$, a profunctor from $\CC$ to $\DD$ is a functor $P : \mathcal{C}^{op} \times \mathcal{D} \rightarrow Set$, consisting of:
\begin{itemize}
    \item for each $a$ object of $\mathcal{C}$ and $b$ object of $\mathcal{D}$, a set $P(a,b)$;
    \item for each $a$ object of $\mathcal{C}$ and $b, d$ objects of $\mathcal{D}$, a function (left action) $\mathcal{D}(d,b) \times P(a,d) \rightarrow P(a,b)$;
    \item for all $a,c$ objects of $\mathcal{C}$ and $b$ object of $\mathcal{D}$, a function (right action) $P(a,b) \times \CC(c,a) \rightarrow P(c,b)$.
\end{itemize}
\end{definition}

This notion is also known as a Bimodule or a ($\mathcal{C}$,$\mathcal{D}$)-module, and also as a distributor. 

Since a profunctor is a functor from the product category $\CC^{op} \times \DD$ to $Set$, it must satisfy the functor laws.%
\begin{align*}
P(1_{C},1_{D}) &= 1_{P(C,D)} \\
P(f\circ g,h \circ i) &= P(g,h) \circ P(f,i) 
\end{align*}

An example of a profunctor is the hom functor $Hom : \mathcal{C}^{op} \times \mathcal{C} \rightarrow Set$, written as $A \rightarrow B$ when $\CC = Set$, and its actions are just pre-composition and post-composition of set functions.

\begin{definition}
Let $\CC$ and $\DD$ be small categories, $Prof(\CC,\DD)$ is the profunctor category consisting of profunctors $\CC^{op} \times \DD \to Set$ as objects, natural transformations between profunctors as morphisms, and vertical composition to compose them. The fact that $\CC$ and $\DD$ are small categories will always be implied when some statement about the category $Prof(\CC,\DD)$ is present. We will call this category $Prof$ when the categories $\CC$ and $\DD$ are clear from the context.
\end{definition}

The profunctor category inherits some structure from the category $Set$ such as binary products given by $(P \times Q) (S,T) = P(S,T) \times Q(S,T)$ and binary coproducts given by $(P + Q) (S,T) = P(S,T) + Q(S,T)$, where $\times, +$ are the respective universal constructions from $Set$. There are also terminal and initial profunctors given by $1_p(S,T) = \{*\}$ and $0_p(S,T) = \emptyset$, i.e., they are just constant maps to the initial and terminal object of $Set$, respectively.
\subsection{Day Convolution}

\begin{definition}
Let $\mathcal{E}$ be a small monoidal category and $F,G : \mathcal{E} \rightarrow Set$, then the \emph{Day convolution} \cite{Day70} of F and G is another functor (in $T$) given by the coend
\begin{equation}
    (F \star G) T = \int^{XY} FX \times GY \times \mathcal{D}(X \otimes Y, T).
\end{equation}
\end{definition}

One can work with this coend in the category $Prof(\CC,\DD)$ of profunctors, that is, letting $\mathcal{E} = \mathcal{C}^{op} \times \mathcal{D}$ in the above definition, enabling us to derive the following Day convolution for profunctors.
\[
     (P \star Q)(S,T) \cong \int^{ABCD} P(A,B) \times Q(C,D) \times \mathcal{C}(S, A \otimes C) \times \mathcal{C}(B \otimes D, T) 
\]

\noindent The profunctor $J(A,B) = \CC(A,I) \times \CC(I,B)$ is a unit for $\star$. When $I = 1$, where $1$ is the terminal object, then $J(A,B) \cong B$. When $\CC = Set$, this convolution gives rise to a monoidal profunctor (see next section), when $\CC$ is a Kleisli category on Set (same objects, but morphisms relies on a monad) gives the notion of an effectful monoidal profunctor (see Section~\ref{sec:effectfulmonoidprofunctor}).

\begin{prop}
\label{unit}
Let $\CC$ be a monoidal category, the profunctor $J(A,B) = \CC(A,I) \times \CC(I,B)$ is the right and left unit of $\star$.
\begin{proof}
The calculation is standard coend calculus \cite{coend} using Yoneda's lemma. This is a proof for $J$ being a right unit, for the left one is analogous.
\begin{flalign*}
    (P \star J)(S,T) & = \int^{ABCD} P(A,B) \times J(C,D) \times \CC(S, A \otimes C) \times \CC(B \otimes D, T) \\
    & \cong \int^{ABCD} P(A,B) \times \CC(C,I) \times \CC(I,D) \times \CC(S, A \otimes C) \times \CC(B \otimes D, T) \\
    & \cong \int^{ABD} P(A,B) \times \CC(I,D) \times \CC(S, A \otimes I) \times \CC(B \otimes D, T) \\
    & \cong \int^{AB} P(A,B) \times \CC(S, A \otimes I) \times \CC(B \otimes I, T) \\
    & \cong \int^{AB} P(A,B) \times \CC(S, A) \times \CC(B, T) \\
    & \cong P(S,T)
\end{flalign*}
\end{proof}
\end{prop}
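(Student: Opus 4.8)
The plan is to prove this entirely by coend calculus, with the co-Yoneda (density) lemma as the only substantial ingredient. I will use it in both variances: for a covariant $F \colon \CC \to Set$ and any object $K$ one has $\int^{X} \CC(X, K) \times F(X) \cong F(K)$, and for a contravariant $G$ one has $\int^{X} \CC(K, X) \times G(X) \cong G(K)$. The guiding observation is that each of the two hom-factors in $J(C,D) = \CC(C,I) \times \CC(I,D)$ has exactly the shape required to integrate out one coend variable, collapsing it to the unit $I$.

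First I would unfold $(P \star J)(S,T)$ using the Day convolution formula for profunctors and substitute the definition of $J$, yielding a coend over $A,B,C,D$ with integrand $P(A,B) \times \CC(C,I) \times \CC(I,D) \times \CC(S, A \otimes C) \times \CC(B \otimes D, T)$. Using Fubini for coends to integrate one variable at a time, I first integrate over $C$: apart from $\CC(C,I)$ the only $C$-dependence is the covariant factor $\CC(S, A \otimes C)$, so the covariant co-Yoneda lemma collapses $C$ to $I$ and replaces this factor by $\CC(S, A \otimes I)$. Symmetrically, integrating over $D$, the factor $\CC(I,D)$ paired with the contravariant factor $\CC(B \otimes D, T)$ lets the contravariant co-Yoneda lemma collapse $D$ to $I$, giving $\CC(B \otimes I, T)$.

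The coend is now $\int^{AB} P(A,B) \times \CC(S, A \otimes I) \times \CC(B \otimes I, T)$. I would then invoke the right unitor $\rho$ to rewrite $A \otimes I \cong A$ and $B \otimes I \cong B$, so the integrand becomes $P(A,B) \times \CC(S,A) \times \CC(B,T)$. A final pair of co-Yoneda applications — contravariant in $A$ against $\CC(S,A)$ and covariant in $B$ against $\CC(B,T)$, matching the variances of $P$ — collapses $A$ to $S$ and $B$ to $T$, leaving $P(S,T)$. The left-unit statement is completely analogous, using the left unitor $\lambda$ and reading the factors of $J \star P$ in the mirrored order.

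The only place care is needed, and hence the main obstacle, is the bookkeeping of variance: at each elimination I must check that the factor being integrated against the hom-set is genuinely covariant or contravariant in the bound variable, so that the right form of the co-Yoneda lemma applies. I would also make explicit that every manipulation is natural in $S$ and $T$, so that the chain of pointwise isomorphisms assembles into an isomorphism of profunctors — this is what ``$J$ is a unit of $\star$'' actually requires, and although it follows automatically from the naturality of the coend calculus steps, it deserves to be stated rather than assumed.
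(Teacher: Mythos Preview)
Your proposal is correct and follows essentially the same route as the paper's proof: unfold the Day convolution, substitute $J$, apply co-Yoneda twice to eliminate $C$ and $D$, use the unitor to simplify $A \otimes I$ and $B \otimes I$, and then apply co-Yoneda twice more to eliminate $A$ and $B$. Your additional remarks on tracking variance and on naturality in $S,T$ are good hygiene but not a departure from the paper's argument.
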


The associativity of $\star$ is required to define a monoidal profunctor category.

\begin{prop}
\label{assoc}
Let ($\CC$,$\otimes$,$I$) be a monoidal category and $S,T$ two objects of $\CC$, the Day convolution for profunctors is an associative tensor product $(P \star Q) \star R \cong P \star (Q \star R)$.
\begin{proof}
The proof follows the same coend calculus pattern using Yoneda's lemma whenever needed.
\end{proof}
\end{prop}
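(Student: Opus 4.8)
The plan is to prove associativity of the Day convolution for profunctors by the same coend-calculus strategy used for Proposition~\ref{unit}: expand both sides into iterated coends, then transform one into the other using the co-Yoneda lemma to eliminate redundant hom-set factors, together with Fubini for coends and the associator $\alpha$ of the underlying monoidal category $(\CC,\otimes,I)$.

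First I would write out the left-hand side $((P \star Q) \star R)(S,T)$ explicitly. Applying the definition of $\star$ once gives an outer coend binding variables for the ``$P\star Q$'' factor and the ``$R$'' factor; substituting the definition of $P \star Q$ into the first factor produces a second, inner coend. After applying Fubini to merge the nested coends into a single coend over all the bound variables, the integrand is a product of $P(A,B)$, $Q(C,D)$, $R(E,F)$, and four hom-sets of $\CC$: two recording how the sources combine (roughly $\CC(S', A\otimes C)$ and $\CC(S, S'\otimes E)$ for an intermediate object $S'$) and two recording how the targets combine on the codomain side (roughly $\CC(B\otimes D, T')$ and $\CC(T'\otimes F, T)$). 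The key move is then to use the co-Yoneda lemma to integrate out the intermediate objects $S'$ and $T'$: the coend over $S'$ of $\CC(S', A\otimes C)\times\CC(S, S'\otimes E)$ collapses to $\CC(S, (A\otimes C)\otimes E)$, and dually for $T'$ one obtains $\CC((B\otimes D)\otimes F, T)$.

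\begin{flalign*}
((P \star Q) \star R)(S,T)
  &\cong \int^{ABCDEF} P(A,B)\times Q(C,D)\times R(E,F) \\
  &\qquad\qquad \times \CC\big(S,(A\otimes C)\otimes E\big)\times \CC\big((B\otimes D)\otimes F, T\big).
\end{flalign*}

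Running the symmetric computation on the right-hand side $(P \star (Q \star R))(S,T)$ yields the same single coend but with the associations $\CC(S, A\otimes(C\otimes E))$ and $\CC(B\otimes(D\otimes F), T)$. To finish, I would invoke the associator isomorphism $\alpha$ of $\CC$: post-composition with $\alpha_{ACE}$ gives a natural bijection $\CC(S,(A\otimes C)\otimes E)\cong\CC(S, A\otimes(C\otimes E))$, and pre-composition with $\alpha_{BDF}^{-1}$ gives $\CC((B\otimes D)\otimes F, T)\cong\CC(B\otimes(D\otimes F), T)$. Since these bijections are natural in all bound variables, they induce an isomorphism between the two coends, establishing the claimed $(P\star Q)\star R \cong P\star(Q\star R)$.

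The main obstacle is bookkeeping rather than conceptual: one must set up the intermediate variables and the Fubini reindexing carefully so that the co-Yoneda eliminations apply cleanly, and must check that every step is natural in the free variables $S$ and $T$ (and functorial in $P,Q,R$) so that the final isomorphism is genuinely a natural isomorphism of profunctors and not merely a pointwise bijection. A secondary subtlety is confirming that the coends used actually exist, which holds because $\CC$ and $\DD$ are small and $Set$ is cocomplete, so all the coends are computed as ordinary colimits in $Set$.
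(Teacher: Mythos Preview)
Your proposal is correct and is exactly the elaboration of the paper's one-line proof (``the same coend calculus pattern using Yoneda's lemma whenever needed''): expand, apply Fubini, collapse the intermediate variables via co-Yoneda, and identify the remaining hom-sets using the associator of $\CC$. The paper provides no further detail, so there is nothing to compare beyond noting that you have supplied the bookkeeping the paper omits.
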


In order to be able to define monoids in a monoidal profunctor category, one needs to check that when $\CC$ and $\DD$ are monoidal categories then $(Prof(\CC,\DD)), \star, J)$ is a monoidal category. 

\begin{theorem}\label{thm:monoidalprofunctorcategory}
Let $\CC$ and $\DD$ are monoidal small categories. Then $(Prof(\CC,\DD)), \star, J)$ is a monoidal category.
\begin{proof} 
Since $\CC$ and $\DD$ are monoidal categories, $\star$ is a bifunctor by construction, and by Proposition \ref{unit} and \ref{assoc} gives the desired morphisms, it follows that $(Prof(\CC,\DD)), \star, J)$ is a monoidal category.   
\end{proof}
\end{theorem}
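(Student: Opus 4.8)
The plan is to verify the four items required by Definition \ref{monocat}: that $Prof(\CC,\DD)$ is a category, that $\star$ is a bifunctor, that $J$ is a unit object, and that there exist natural isomorphisms $\alpha$, $\rho$, and $\lambda$. The first and third are immediate — $Prof(\CC,\DD)$ is a category by construction, and $J(A,B) = \CC(A,I) \times \CC(I,B)$ is visibly a profunctor, hence an object of it. Moreover, two of the three structural isomorphisms are already in hand: Proposition \ref{unit} supplies the unitors, since its displayed chain of isomorphisms establishes $P \star J \cong P$ and, by the symmetric calculation, $J \star P \cong P$; and Proposition \ref{assoc} supplies the associator $(P \star Q) \star R \cong P \star (Q \star R)$. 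What genuinely remains is to confirm that $\star$ is a bifunctor and that these pointwise isomorphisms are natural in their profunctor arguments.

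For bifunctoriality I would first pin down the action of $\star$ on morphisms. Given natural transformations $\eta : P \Rightarrow P'$ and $\theta : Q \Rightarrow Q'$, the defining coend for $(P \star Q)(S,T)$ is covariantly functorial in its profunctor arguments, so applying $\eta_{A,B}$ and $\theta_{C,D}$ to the $P(A,B)$ and $Q(C,D)$ factors of the integrand and invoking the universal property of the coend produces a family $(\eta \star \theta)_{S,T}$. I would then check the two functor laws, $\mathrm{id}_P \star \mathrm{id}_Q = \mathrm{id}_{P \star Q}$ and $(\eta' \circ \eta) \star (\theta' \circ \theta) = (\eta' \star \theta') \circ (\eta \star \theta)$; both follow from the uniqueness clause of the universal property, since the two sides induce the same map out of the coend. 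One must also confirm that $(\eta \star \theta)_{S,T}$ is natural in $S$ and $T$, so that $\eta \star \theta$ is itself a morphism in $Prof(\CC,\DD)$; this uses that the representable hom-factors in the integrand carry exactly the profunctor actions on $S$ and $T$.

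The main obstacle — the step the displayed one-line proof compresses into the phrase ``gives the desired morphisms'' — is upgrading the object-level isomorphisms of Propositions \ref{unit} and \ref{assoc} to genuine natural transformations of profunctors. Each isomorphism in those coend computations is an instance of the co-Yoneda (density) lemma or a Fubini-style reindexing of iterated coends, and every such step is natural (indeed dinatural) in the variables it does not integrate out. The work is to propagate this naturality through the entire composite: I would show that $\rho_P : P \star J \Rightarrow P$ commutes with an arbitrary $\eta : P \Rightarrow P'$ by noting that each mediating isomorphism in the chain is induced by universal maps stable under post-composition with $\eta$, and I would argue identically for $\lambda$ and for $\alpha$ in each of its three arguments. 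This naturality bookkeeping, rather than any single difficult identity, is where the real effort lies.

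Assembling these facts yields exactly the data of Definition \ref{monocat}, so $(Prof(\CC,\DD), \star, J)$ is a monoidal category. I note that the definition adopted here does not impose the pentagon and triangle coherence laws; were one to demand full Mac Lane coherence, the pentagon for $\alpha$ and the triangle relating $\alpha$ to $\rho$ and $\lambda$ would reduce, via the universal properties of the coends involved, to the corresponding coherence laws holding in the base categories $\CC$ and $\DD$, which they do by hypothesis.
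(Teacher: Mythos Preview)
Your proposal is correct and follows the same approach as the paper: invoke Propositions \ref{unit} and \ref{assoc} for the unitors and associator, and note that $\star$ is bifunctorial by construction. The paper's proof is a single sentence that does exactly this; your version simply unpacks the details the paper leaves implicit---the action of $\star$ on morphisms, the naturality of the structural isomorphisms in their profunctor arguments, and the (correct) observation that Definition~\ref{monocat} as stated omits the coherence axioms.
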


Having obtained a monoidal category of profunctors, it is now possible to define a monoid in this category. In order to do that, we will use the following proposition (as in the work of Rivas and Jaskelioff \cite{Jask2017}).  

\begin{prop}
\label{monprof}
Let $\DD = \CC^{op} \otimes \CC$, there is a one-to-one correspondence defining morphisms going out of a Day convolution for profunctors
\begin{equation*}
\int_{XY} (P\star Q)(X,Y) \to R(X,Y) \cong \int_{ABCD} P(A,B) \times Q(C,D) \to R(A \otimes C,B \otimes D)
\end{equation*}
which is natural in $P$, $Q$ and $R$.
\begin{proof*}
This proof uses the same coend calculus pattern with the help of Yoneda lemma and the fact that the hom functor commutes with ends and coends \cite{coend}.
\begin{align*}
&\int_{XY} (P\star Q)(X,Y) \to R(X,Y)  \\
\cong & \int_{XY} (\int^{ABCD} (P(A,B) \times Q(C,D)) \times \CC(X, A \otimes C) \times \CC(B \otimes D,Y)) \to R(X,Y) \\
\cong& \int_{XYABCD} (P(A,B) \times Q(C,D)) \times \CC(X, A \otimes C) \times \CC(B \otimes D,Y) \to R(X,Y) \\
\cong& \int_{XYABCD} (P(A,B) \times Q(C,D)) \to \CC(X, A \otimes C) \to \CC(B \otimes D,Y) \to R(X,Y) \\
\cong& \int_{YABCD} (P(A,B) \times Q(C,D)) \to \CC(B \otimes D,Y) \to R(A \otimes C,Y) \\ 
\cong& \int_{ABCD} P(A,B) \times Q(C,D) \to R(A \otimes C,B \otimes D)
\end{align*}

\end{proof*}
\end{prop}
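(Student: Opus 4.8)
The plan is to establish the bijection through a chain of natural isomorphisms, using exactly the coend-calculus pattern employed in Propositions~\ref{unit} and~\ref{assoc}. First I would read the left-hand side $\int_{XY} (P \star Q)(X,Y) \to R(X,Y)$ as the end computing the natural transformations $P \star Q \Rightarrow R$, and substitute into it the coend formula for the Day convolution $(P \star Q)(X,Y)$ recorded above. This replaces the source of the arrow by a coend over $ABCD$ of $P(A,B) \times Q(C,D) \times \CC(X, A \otimes C) \times \CC(B \otimes D, Y)$.

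The decisive tool is continuity of the hom-functor: a map out of a coend is an end of maps, that is, $(\int^{W} F W) \to Z \cong \int_{W} (F W \to Z)$. Applying this turns the inner coend over $ABCD$ into an end, which by the Fubini theorem for ends I would merge with the outer end over $XY$ to obtain a single end $\int_{XYABCD}$ of maps $(P(A,B) \times Q(C,D)) \times \CC(X, A \otimes C) \times \CC(B \otimes D, Y) \to R(X,Y)$. Next I would uncurry in the closed category $\mathrm{Set}$, rewriting the iterated product as an iterated function space so that the two representable factors $\CC(X, A \otimes C)$ and $\CC(B \otimes D, Y)$ appear as successive arguments of the arrow.

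With the representables thus isolated, I would discharge the dummy variables $X$ and $Y$ by two applications of Yoneda. Integrating over $X$ uses the contravariant (co-)Yoneda form $\int_{X} \CC(X, A \otimes C) \to R(X, Y) \cong R(A \otimes C, Y)$, valid because $R$ is contravariant in its first argument; integrating over $Y$ then uses the covariant form $\int_{Y} \CC(B \otimes D, Y) \to R(A \otimes C, Y) \cong R(A \otimes C, B \otimes D)$, valid because $R$ is covariant in its second argument. After these reductions only $\int_{ABCD} P(A,B) \times Q(C,D) \to R(A \otimes C, B \otimes D)$ survives, which is the right-hand side. Naturality in $P$, $Q$ and $R$ is automatic, since every isomorphism in the chain — coend continuity of hom, Fubini, currying, and Yoneda — is itself natural.

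I expect the main obstacle to be bookkeeping rather than conceptual: correctly matching variances when applying the two Yoneda reductions (contravariant in $X$ against the first slot of $R$, covariant in $Y$ against the second slot), and ensuring the representable $\CC(B \otimes D, Y)$ passes intact through the $X$-reduction so that it remains available for the $Y$-reduction, which is legitimate because it does not depend on $X$ and can be commuted past by symmetry of currying. A secondary point worth making explicit is that the continuity-of-hom step requires the coend to occupy the \emph{source} of the arrow, which is precisely why the Day convolution must appear as the domain of the natural transformation.
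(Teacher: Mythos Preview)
Your proposal is correct and follows essentially the same route as the paper: expand the Day convolution, use continuity of hom to turn the coend into an end merged with the outer one, curry to isolate the representables, and then apply Yoneda twice to eliminate $X$ and $Y$. The paper's proof is terser and does not explicitly name Fubini or the variance matching, but the chain of isomorphisms is identical to yours.
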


Whenever $P = Q = R$ in the equation of Proposition $\ref{monprof}$ we get the following isomorphism, useful to define a monoid in the profunctor category $Prof$ with Day convolution as its tensor.
\[
    \int_{XY} (P\star P)(X,Y) \to P(X,Y) \cong \int_{ABCD} P(A,B) \times P(C,D)\to P(A \otimes C,B \otimes D)
\]

\section{Monoidal Profunctors}
\label{sec:monoidalprofunctors}

We define \emph{monoidal profunctors} as monoids in the monoidal category of profunctors of theorem~\ref{thm:monoidalprofunctorcategory}.

\begin{prop}
\label{unitmonopro}
Let ($\CC$,$\otimes$,$I$) be a small monoidal category, $P: \CC^{op} \times \CC \to Set$ be a profunctor, and $S,T$ two objects of $\CC$. Then $\CC(J(S,T),P(S,T)) \cong P(I,I)$.
\end{prop}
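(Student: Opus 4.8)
The plan is to read the left-hand side as an end over the objects $S,T$, i.e.\ as the set of natural transformations $\int_{ST} Set(J(S,T),P(S,T))$; this is exactly the collection of candidate unit morphisms $e : J \to P$ for a monoid structure on $P$, and it is the only sensible reading, since a pointwise interpretation already fails at $S=T=I$. I would then compute this end by the same coend calculus used in Propositions~\ref{unit}, \ref{assoc} and \ref{monprof}, relying on the Yoneda lemma and on the fact that the hom-functor sends colimits in its first variable to limits.

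First I would substitute $J(S,T) = \CC(S,I) \times \CC(I,T)$ and apply the currying isomorphism $Set(A \times B, C) \cong Set(A, Set(B,C))$ under the end sign, rewriting the integrand as $Set(\CC(S,I), Set(\CC(I,T), P(S,T)))$. This step is legitimate because $Set(-,X)$ preserves limits and ends are limits, so the manipulation commutes with integration.

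Next I would discharge the two ends one variable at a time using the Yoneda lemma for ends. Holding $S$ fixed, both $\CC(I,-)$ and $P(S,-)$ are covariant in $T$, so the covariant ninja-Yoneda identity $\int_T Set(\CC(I,T), F T) \cong F(I)$ collapses $\int_T Set(\CC(I,T), P(S,T))$ to $P(S,I)$. The remaining end is $\int_S Set(\CC(S,I), P(S,I))$; here $\CC(-,I)$ and $P(-,I)$ are contravariant in $S$, so the contravariant form of the same identity, $\int_S Set(\CC(S,I), G S) \cong G(I)$, yields $P(I,I)$, as claimed.

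The main thing to get right is the bookkeeping of variances, since the $T$-integration invokes the covariant Yoneda lemma while the $S$-integration invokes the contravariant one; choosing the wrong form at either step would land on the wrong argument of $P$. Beyond that the argument is routine, with no genuine obstacle: it is a twofold application of Yoneda after an uncurrying, mirroring the computation already carried out for the unit law in Proposition~\ref{unit}.
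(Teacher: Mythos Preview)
Your proposal is correct and follows essentially the same route as the paper: expand $J$, curry, then apply Yoneda once in $T$ (covariantly) and once in $S$ (contravariantly) to land on $P(I,I)$. The only difference is that you make explicit the end $\int_{ST}$ that the paper leaves implicit in its notation; your observation that the pointwise reading fails and that the intended object is the set of natural transformations $J \Rightarrow P$ is exactly what the paper's subsequent use of the proposition (``the unit is a natural transformation $e : J \to P$, which by this proposition is isomorphic to $e : P(I,I)$'') requires.
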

\begin{proof}
\begin{align*}
\CC(J(S,T),P(S,T)) &\cong \CC(S,I) \times \CC(I,T) \to P(S,T) \\
& \cong \CC(S,I) \to \CC(I,T) \to P(S,T) \\
& \cong \CC(S,I) \to P(S,I) \\
& \cong P(I,I)
\end{align*}
\end{proof}

With all categorical tools in hand, the central notion of this work emerges from the category of monoidal profunctors.

\begin{definition}
Let $(\CC,\otimes,I)$ and $(\DD,\otimes,I)$ be small monoidal categories. A monoid in the monoidal profunctor category $Prof(\CC,\DD)$ consists of a profunctor $P$, a unit $e : P(I,I)$, and a multiplication given by a natural family of morphisms $m_{ABCD} : P(A,B) \times P(C,D)\to P(A \otimes C,B \otimes D)$. 
\end{definition}

\noindent
The unit is a natural transformation $e : J \to P$, which by proposition~\ref{unitmonopro} is isomorphic to $e : P(I,I)$. The multiplication is a natural transformation $m : P \star P \to P$, which by~proposition~\ref{monprof} is equivalent to the family above.

As an example, consider $(Set,\otimes,I)$, where $I$ is a singleton set, and the $Hom$ profunctor $P(A,B) = A \to B$, trivially gives us a monoidal profunctor.


\begin{prop}
\label{exp}
Let $(\CC,\otimes,I)$ and $(\DD,\otimes,J)$ be a small monoidal categories, and $P, Q$ monoidal profunctors, then 
\[
(P \Rightarrow Q)(X,Y) = \int_{CD} P(C,D) \rightarrow Q(X \otimes C,Y \otimes D) 
\]
defines an internal hom on the monoidal profunctor category $Prof(\CC,\DD)$.
\begin{proof}
This proof follows the same steps as in the functor case \cite{Jask2017} but adapting it for $Prof(\CC,\DD)$.
\end{proof}
\end{prop}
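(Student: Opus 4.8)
The claim is that $Q \Rightarrow R$ serves as internal hom, i.e.\ that it is right adjoint to $(-) \star Q$. Concretely I must produce a bijection
\[
\int_{XY} (P \star Q)(X,Y) \to R(X,Y) \;\cong\; \int_{XY} P(X,Y) \to (Q \Rightarrow R)(X,Y),
\]
natural in $P$, $Q$ and $R$ (the ends being the sets of natural transformations, as in Proposition~\ref{monprof}). The plan has two parts: first check that $Q \Rightarrow R$ is really an object of $Prof(\CC,\DD)$, and then verify the displayed isomorphism by coend calculus. For the first part I would inspect variances: in $R(X \otimes C, Y \otimes D)$ the factor $X \otimes C$ occupies the contravariant slot of $R$ and $Y \otimes D$ the covariant slot, so $(Q \Rightarrow R)(X,Y)$ is contravariant in $X$ and covariant in $Y$, which is exactly what it means to be a functor $\CC^{op} \times \DD \to Set$; functoriality is then inherited from that of $R$ and of the tensors.

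For the isomorphism itself, the left-hand side is handled in one stroke by Proposition~\ref{monprof}, giving
\[
\int_{XY} (P \star Q)(X,Y) \to R(X,Y) \;\cong\; \int_{ABCD} P(A,B) \times Q(C,D) \to R(A \otimes C, B \otimes D).
\]
For the right-hand side I would unfold the definition of $Q \Rightarrow R$ and then pull the inner end out through the exponential:
\begin{align*}
\int_{XY} P(X,Y) \to (Q \Rightarrow R)(X,Y)
&= \int_{XY} P(X,Y) \to \int_{CD} Q(C,D) \to R(X \otimes C, Y \otimes D) \\
&\cong \int_{XYCD} P(X,Y) \to \big( Q(C,D) \to R(X \otimes C, Y \otimes D)\big) \\
&\cong \int_{XYCD} P(X,Y) \times Q(C,D) \to R(X \otimes C, Y \otimes D).
\end{align*}
Here the first isomorphism uses that the representable functor $S \to (-)$ preserves ends (the hom functor is continuous) together with the Fubini rule for ends, and the second is currying in $Set$. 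Relabelling $X,Y$ as $A,B$ makes the last line identical to the right-hand side of the previous display, which establishes the bijection.

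What remains, and is the part I would be most careful about, is naturality of this bijection in $P$, $Q$ and $R$: each individual step (Proposition~\ref{monprof}, continuity of the hom functor, Fubini, and currying) is natural, so their composite is a natural family, but this has to be tracked rather than merely asserted. The argument is structurally the same as the endofunctor computation of Rivas and Jaskelioff~\cite{Jask2017}; the genuine obstacle here is purely in the bookkeeping --- keeping the contravariant $\CC$-variable and the covariant $\DD$-variable separate through every coend manipulation, and remembering that the tensor in the first coordinate is that of $\CC$ while the tensor in the second is that of $\DD$.
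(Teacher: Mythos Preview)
Your proposal is correct and is precisely the argument the paper points to: the paper's own proof is a one-line reference to the functor case in Rivas and Jaskelioff~\cite{Jask2017}, and you have simply carried out that adaptation explicitly, using Proposition~\ref{monprof} on the left and end/currying manipulations on the right. There is nothing to add; your write-up is in fact more detailed than the paper's.
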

This proposition states that the monoidal category of profunctors $Prof$ is closed.

\section{Implementation in Haskell}
We implement in Haskell the concepts of profunctors, Day convolution, and monoidal profunctors defined before. Although other monoidal profunctors can be derived using other bifunctors, this work focuses only on the product. This section considers the (fictitious) category $Hask$ as a small monoidal category and $Prof(Hask,Hask)$ as the small category of profunctors to implement the typeclass \ensuremath{\Conid{MonoPro}} which represents the desired monoid over this category.
\label{sec:haskellimplementation}
\subsection{Profunctors}
\label{sec:profunctortypeclass}
A profunctor is an instance of the following class

\begin{hscode}\SaveRestoreHook
\column{B}{@{}>{\hspre}l<{\hspost}@{}}%
\column{5}{@{}>{\hspre}l<{\hspost}@{}}%
\column{E}{@{}>{\hspre}l<{\hspost}@{}}%
\>[B]{}\mathbf{class}\;\Conid{Profunctor}\;\Varid{p}\;\mathbf{where}{}\<[E]%
\\
\>[B]{}\hsindent{5}{}\<[5]%
\>[5]{}\Varid{dimap}\mathbin{::}(\Varid{a}\to \Varid{b})\to (\Varid{c}\to \Varid{d})\to \Varid{p}\;\Varid{b}\;\Varid{c}\to \Varid{p}\;\Varid{a}\;\Varid{d}{}\<[E]%
\ColumnHook
\end{hscode}\resethooks

A profunctor is a functor, thus \ensuremath{\Varid{dimap}} needs to satisfy the functor laws.


The profunctor interface lifts pure functions into both type arguments, the first in a contravariant manner, and the second in a covariant way. 
A morphism in the $Prof$ category can be represented, in Haskell, as the type below.

\begin{hscode}\SaveRestoreHook
\column{B}{@{}>{\hspre}l<{\hspost}@{}}%
\column{E}{@{}>{\hspre}l<{\hspost}@{}}%
\>[B]{}\mathbf{type}\;(\leadsto)\;\Varid{p}\;\Varid{q}\mathrel{=}\forall \Varid{x}\hsforall \;\Varid{y}\hsdot{\circ }{.}\Varid{p}\;\Varid{x}\;\Varid{y}\to \Varid{q}\;\Varid{x}\;\Varid{y}{}\<[E]%
\ColumnHook
\end{hscode}\resethooks

The function type \ensuremath{(\to )}, is the most basic example of a profunctor. 

One notion captured by a Profunctor is that of a structured input and structured output of a function \ensuremath{\Conid{SISO}}. This type generalizes Kleisli arrows which allow a pure input and a structured output. 
\begin{hscode}\SaveRestoreHook
\column{B}{@{}>{\hspre}l<{\hspost}@{}}%
\column{5}{@{}>{\hspre}l<{\hspost}@{}}%
\column{E}{@{}>{\hspre}l<{\hspost}@{}}%
\>[B]{}\mathbf{data}\;\Conid{SISO}\;\Varid{f}\;\Varid{g}\;\Varid{a}\;\Varid{b}\mathrel{=}\Conid{SISO}\;\{\mskip1.5mu \Varid{unSISO}\mathbin{::}\Varid{f}\;\Varid{a}\to \Varid{g}\;\Varid{b}\mskip1.5mu\}{}\<[E]%
\\[\blanklineskip]%
\>[B]{}\mathbf{instance}\;(\Conid{Functor}\;\Varid{f},\Conid{Functor}\;\Varid{g})\Rightarrow \Conid{Profunctor}\;(\Conid{SISO}\;\Varid{f}\;\Varid{g})\;\mathbf{where}{}\<[E]%
\\
\>[B]{}\hsindent{5}{}\<[5]%
\>[5]{}\Varid{dimap}\;\Varid{ab}\;\Varid{cd}\;(\Conid{SISO}\;\Varid{bc})\mathrel{=}\Conid{SISO}\;(\Varid{fmap}\;\Varid{cd}\hsdot{\circ }{.}\Varid{bc}\hsdot{\circ }{.}\Varid{fmap}\;\Varid{ab}){}\<[E]%
\ColumnHook
\end{hscode}\resethooks


\subsection{The Day convolution type}

The Day convolution is represented by the existential type

\begin{hscode}\SaveRestoreHook
\column{B}{@{}>{\hspre}l<{\hspost}@{}}%
\column{E}{@{}>{\hspre}l<{\hspost}@{}}%
\>[B]{}\mathbf{data}\;\Conid{Day}\;\Varid{p}\;\Varid{q}\;\Varid{s}\;\Varid{t}\mathrel{=}\forall \Varid{a}\hsforall \;\Varid{b}\;\Varid{c}\;\Varid{d}\hsdot{\circ }{.}\Conid{Day}\;(\Varid{p}\;\Varid{a}\;\Varid{b})\;(\Varid{q}\;\Varid{c}\;\Varid{d})\;(\Varid{s}\to (\Varid{a},\Varid{c}))\;(\Varid{b}\to \Varid{d}\to \Varid{t}){}\<[E]%
\ColumnHook
\end{hscode}\resethooks

Since $\CC(A,I)$ is isomorphic to a singleton set (unit of the cartesian product $\times$), and $\CC(I,B) \cong B$, one can write, in Haskell, the type

\begin{hscode}\SaveRestoreHook
\column{B}{@{}>{\hspre}l<{\hspost}@{}}%
\column{E}{@{}>{\hspre}l<{\hspost}@{}}%
\>[B]{}\mathbf{data}\;\Conid{I}\;\Varid{a}\;\Varid{b}\mathrel{=}\Conid{I}\;\{\mskip1.5mu \Varid{unI}\mathbin{::}\Varid{b}\mskip1.5mu\}{}\<[E]%
\ColumnHook
\end{hscode}\resethooks

\noindent as the unit of the Day convolution. The following functions are representations of the right and left units.

\begin{hscode}\SaveRestoreHook
\column{B}{@{}>{\hspre}l<{\hspost}@{}}%
\column{E}{@{}>{\hspre}l<{\hspost}@{}}%
\>[B]{}\rho\mathbin{::}\Conid{Profunctor}\;\Varid{p}\Rightarrow \Conid{Day}\;\Varid{p}\;\Conid{I}\leadsto\Varid{p}{}\<[E]%
\\
\>[B]{}\rho\;(\Conid{Day}\;\Varid{pab}\;(\Conid{I}\;\Varid{d})\;\Varid{sac}\;\Varid{bdt})\mathrel{=}\Varid{dimap}\;(\Varid{fst}\hsdot{\circ }{.}\Varid{sac})\;(\lambda \Varid{b}\to \Varid{bdt}\;\Varid{b}\;\Varid{d})\;\Varid{pab}{}\<[E]%
\\[\blanklineskip]%
\>[B]{}\lambda\mathbin{::}\Conid{Profunctor}\;\Varid{q}\Rightarrow \Conid{Day}\;\Conid{I}\;\Varid{q}\leadsto\Varid{q}{}\<[E]%
\\
\>[B]{}\lambda\;(\Conid{Day}\;(\Conid{I}\;\Varid{b})\;\Varid{qcd}\;\Varid{sac}\;\Varid{bdt})\mathrel{=}\Varid{dimap}\;(\Varid{snd}\hsdot{\circ }{.}\Varid{sac})\;(\lambda \Varid{d}\to \Varid{bdt}\;\Varid{b}\;\Varid{d})\;\Varid{qcd}{}\<[E]%
\ColumnHook
\end{hscode}\resethooks

The associativity of the Day convolution and its symmetric map also can be represented in Haskell as the functions below.

\begin{hscode}\SaveRestoreHook
\column{B}{@{}>{\hspre}l<{\hspost}@{}}%
\column{5}{@{}>{\hspre}l<{\hspost}@{}}%
\column{9}{@{}>{\hspre}l<{\hspost}@{}}%
\column{23}{@{}>{\hspre}l<{\hspost}@{}}%
\column{26}{@{}>{\hspre}l<{\hspost}@{}}%
\column{E}{@{}>{\hspre}l<{\hspost}@{}}%
\>[B]{}\alpha \mathbin{::}(\Conid{Profunctor}\;\Varid{p},\Conid{Profunctor}\;\Varid{q},\Conid{Profunctor}\;\Varid{r})\Rightarrow \Conid{Day}\;(\Conid{Day}\;\Varid{p}\;\Varid{q})\;\Varid{r}\leadsto\Conid{Day}\;\Varid{p}\;(\Conid{Day}\;\Varid{q}\;\Varid{r}){}\<[E]%
\\
\>[B]{}\alpha \;(\Conid{Day}\;(\Conid{Day}\;\Varid{p}\;\Varid{q}\;\Varid{s}_{1}\;\Varid{f})\;\Varid{r}\;\Varid{s}_{2}\;\Varid{g})\mathrel{=}\Conid{Day}\;\Varid{p}\;(\Conid{Day}\;\Varid{q}\;\Varid{r}\;\Varid{f}_{1}\;\Varid{f}_{2})\;\Varid{f}_{3}\;\Varid{f}_{4}{}\<[E]%
\\
\>[B]{}\hsindent{5}{}\<[5]%
\>[5]{}\mathbf{where}{}\<[E]%
\\
\>[5]{}\hsindent{4}{}\<[9]%
\>[9]{}\Varid{f}_{1}{}\<[23]%
\>[23]{}\mathrel{=}\Varid{first'}\;(\Varid{snd}\hsdot{\circ }{.}\Varid{s}_{1})\hsdot{\circ }{.}\Varid{s}_{2}{}\<[E]%
\\
\>[5]{}\hsindent{4}{}\<[9]%
\>[9]{}\Varid{f}_{2}\;\Varid{d}_{1}\;\Varid{d}_{2}{}\<[23]%
\>[23]{}\mathrel{=}{}\<[26]%
\>[26]{}(\Varid{d}_{2},\lambda \Varid{x}\to \Varid{f}\;\Varid{x}\;\Varid{d}_{1}){}\<[E]%
\\
\>[5]{}\hsindent{4}{}\<[9]%
\>[9]{}\Varid{f}_{3}{}\<[23]%
\>[23]{}\mathrel{=}\Varid{first'}\;(\Varid{fst}\hsdot{\circ }{.}\Varid{s}_{1}\hsdot{\circ }{.}(\Varid{fst}\hsdot{\circ }{.}\Varid{s}_{2}))\hsdot{\circ }{.}\Varid{diag}{}\<[E]%
\\
\>[5]{}\hsindent{4}{}\<[9]%
\>[9]{}\Varid{f}_{4}\;\Varid{b}_{1}\;(\Varid{d}_{2},\Varid{h}){}\<[23]%
\>[23]{}\mathrel{=}\Varid{g}\;(\Varid{h}\;\Varid{b}_{1})\;\Varid{d}_{2}{}\<[E]%
\\[\blanklineskip]%
\>[B]{}\gamma \mathbin{::}(\Conid{Profunctor}\;\Varid{p},\Conid{Profunctor}\;\Varid{q})\Rightarrow \Conid{Day}\;\Varid{p}\;\Varid{q}\leadsto\Conid{Day}\;\Varid{q}\;\Varid{p}{}\<[E]%
\\
\>[B]{}\gamma \;(\Conid{Day}\;\Varid{p}\;\Varid{q}\;\Varid{sac}\;\Varid{bdt})\mathrel{=}\Conid{Day}\;\Varid{q}\;\Varid{p}\;(\Varid{swap}\hsdot{\circ }{.}\Varid{sac})\;(\Varid{flip}\;\Varid{bdt}){}\<[E]%
\\
\>[B]{}\hsindent{5}{}\<[5]%
\>[5]{}\mathbf{where}\;\Varid{swap}\;(\Varid{x},\Varid{y})\mathrel{=}(\Varid{y},\Varid{x}){}\<[E]%
\ColumnHook
\end{hscode}\resethooks

Functions \ensuremath{\rho }, \ensuremath{\lambda }, and \ensuremath{\alpha } are natural isomorphisms. We leave the definition of the inverses as an exercise for the reader.

\subsection{MonoPro typeclass}
We define a typeclass called \ensuremath{\Conid{MonoPro}} for implementing monoidal profunctors.
The type \ensuremath{\Varid{p}\;()\;()} is a representation in Haskell of the unit $P(I,I)$.
The multiplication is obtained from Proposition~\ref{monprof}, which gives the multiplication a type $\int_{ABCD} P(A,B) \times P(C,D)\to P(A \otimes C,B \otimes D)$ allowing to write the following class in Haskell.

\begin{hscode}\SaveRestoreHook
\column{B}{@{}>{\hspre}l<{\hspost}@{}}%
\column{5}{@{}>{\hspre}l<{\hspost}@{}}%
\column{E}{@{}>{\hspre}l<{\hspost}@{}}%
\>[B]{}\mathbf{class}\;\Conid{Profunctor}\;\Varid{p}\Rightarrow \Conid{MonoPro}\;\Varid{p}\;\mathbf{where}{}\<[E]%
\\
\>[B]{}\hsindent{5}{}\<[5]%
\>[5]{}\Varid{mpempty}\mathbin{::}\Varid{p}\;()\;(){}\<[E]%
\\
\>[B]{}\hsindent{5}{}\<[5]%
\>[5]{}(\star)\mathbin{::}\Varid{p}\;\Varid{b}\;\Varid{c}\to \Varid{p}\;\Varid{d}\;\Varid{e}\to \Varid{p}\;(\Varid{b},\Varid{d})\;(\Varid{c},\Varid{e}){}\<[E]%
\ColumnHook
\end{hscode}\resethooks

\noindent satisfying the monoid laws

\begin{itemize}
\item Left  identity: \ensuremath{\Varid{dimap}\;\Varid{diag}\;\Varid{snd}\;(\Varid{mpempty}\star\Varid{f})\mathrel{=}\Varid{f}}
\item Right identity: \ensuremath{\Varid{dimap}\;\Varid{diag}\;\Varid{fst}\;(\Varid{f}\star\Varid{mpempty})\mathrel{=}\Varid{f}}
\item Associativity: \ensuremath{\Varid{dimap}\;assoc^{-1}\;\Varid{assoc}\;(\Varid{f}\star(\Varid{g}\star\Varid{h}))\mathrel{=}(\Varid{f}\star\Varid{g})\star\Varid{h}}
\end{itemize}

\noindent where the helper functions \ensuremath{\Varid{diag}\mathbin{::}\Varid{x}\to (\Varid{x},\Varid{x})},
\ensuremath{assoc^{-1}\mathbin{::}((\Varid{x},\Varid{y}),\Varid{z})\to (\Varid{x},(\Varid{y},\Varid{z}))}, and \ensuremath{\Varid{assoc}\mathbin{::}(\Varid{x},(\Varid{y},\Varid{z}))\to ((\Varid{x},\Varid{y}),\Varid{z})} are the obvious ones.

Another way to understand \ensuremath{\Conid{MonoPro}} is that it lifts pure functions with many
inputs to a binary constructor type, while a profunctor only lifts functions
with one type as input parameter.

\begin{hscode}\SaveRestoreHook
\column{B}{@{}>{\hspre}l<{\hspost}@{}}%
\column{E}{@{}>{\hspre}l<{\hspost}@{}}%
\>[B]{}\Varid{lmap}_{2}\mathbin{::}\Conid{MonoPro}\;\Varid{p}\Rightarrow (\Varid{s}\to (\Varid{a},\Varid{c}))\to \Varid{p}\;\Varid{a}\;\Varid{b}\to \Varid{p}\;\Varid{c}\;\Varid{d}\to \Varid{p}\;\Varid{s}\;(\Varid{b},\Varid{d}){}\<[E]%
\\
\>[B]{}\Varid{lmap}_{2}\;\Varid{f}\;\Varid{pa}\;\Varid{pc}\mathrel{=}\Varid{dimap}\;\Varid{f}\;\Varid{id}\;(\Varid{pa}\star\Varid{pc}){}\<[E]%
\\[\blanklineskip]%
\>[B]{}\Varid{rmap}_{2}\mathbin{::}\Conid{MonoPro}\;\Varid{p}\Rightarrow ((\Varid{b},\Varid{d})\to \Varid{t})\to \Varid{p}\;\Varid{a}\;\Varid{b}\to \Varid{p}\;\Varid{c}\;\Varid{d}\to \Varid{p}\;(\Varid{a},\Varid{c})\;\Varid{t}{}\<[E]%
\\
\>[B]{}\Varid{rmap}_{2}\;\Varid{f}\;\Varid{pa}\;\Varid{pc}\mathrel{=}\Varid{dimap}\;\Varid{id}\;\Varid{f}\;(\Varid{pa}\star\Varid{pc}){}\<[E]%
\ColumnHook
\end{hscode}\resethooks

which can work together as one function 

\begin{hscode}\SaveRestoreHook
\column{B}{@{}>{\hspre}l<{\hspost}@{}}%
\column{E}{@{}>{\hspre}l<{\hspost}@{}}%
\>[B]{}\Varid{rlmap}\mathbin{::}\Conid{MonoPro}\;\Varid{p}\Rightarrow ((\Varid{b},\Varid{d})\to \Varid{t})\to (\Varid{s}\to (\Varid{a},\Varid{c}))\to \Varid{p}\;\Varid{a}\;\Varid{b}\to \Varid{p}\;\Varid{c}\;\Varid{d}\to \Varid{p}\;\Varid{s}\;\Varid{t}{}\<[E]%
\\
\>[B]{}\Varid{rlmap}\;\Varid{f}\;\Varid{g}\;\Varid{pa}\;\Varid{pc}\mathrel{=}\Varid{dimap}\;\Varid{f}\;\Varid{g}\;(\Varid{pa}\star\Varid{pc}){}\<[E]%
\ColumnHook
\end{hscode}\resethooks

\noindent
which is the same behavior as the Day convolution of \ensuremath{\Varid{p}} with itself. Such convolution is the raison d'être of a monoidal profunctor. A parallel composition is followed by a covariant and a contravariant lifting of two pure functions matching the inner structure, which in our case is the product type \ensuremath{(,)}. 

If one chooses a suitable monoidal profunctor \ensuremath{\Varid{p}} and type \ensuremath{\Varid{s}}, 
\ensuremath{\Conid{MonoPro}\;\Varid{p}\;\Varid{s}} inherits the applicative functor behavior naturally. 

\begin{hscode}\SaveRestoreHook
\column{B}{@{}>{\hspre}l<{\hspost}@{}}%
\column{24}{@{}>{\hspre}l<{\hspost}@{}}%
\column{E}{@{}>{\hspre}l<{\hspost}@{}}%
\>[B]{}\Varid{appToMonoPro}\mathbin{::}\Conid{MonoPro}\;\Varid{p}\Rightarrow \Varid{p}\;\Varid{s}\;(\Varid{a}\to \Varid{b})\to \Varid{p}\;\Varid{s}\;\Varid{a}\to \Varid{p}\;\Varid{s}\;\Varid{b}{}\<[E]%
\\
\>[B]{}\Varid{appToMonoPro}\;\Varid{pab}\;\Varid{pa}\mathrel{=}{}\<[24]%
\>[24]{}\Varid{dimap}\;\Varid{diag}\;(\Varid{uncurry}\;(\mathbin{\$}))\;(\Varid{pab}\star\Varid{pa}){}\<[E]%
\ColumnHook
\end{hscode}\resethooks

The \ensuremath{\Conid{MonoPro}} typeclass has a straightforward instance for the Hom profunctor \ensuremath{(\to )} which satisfies the monoidal profunctor laws trivially. A practical use for this instance is writing expressions in a pointfree manner. One can write an \ensuremath{\Varid{unzip'}\mathbin{::}\Conid{Functor}\;\Varid{f}\Rightarrow \Varid{f}\;(\Varid{a},\Varid{b})\to (\Varid{f}\;\Varid{a},\Varid{f}\;\Varid{b})} function, for example, for any functor that has as as input a pair type. A \ensuremath{\Conid{SISO}} is another example of a monoidal profunctor.

\begin{hscode}\SaveRestoreHook
\column{B}{@{}>{\hspre}l<{\hspost}@{}}%
\column{5}{@{}>{\hspre}l<{\hspost}@{}}%
\column{11}{@{}>{\hspre}l<{\hspost}@{}}%
\column{E}{@{}>{\hspre}l<{\hspost}@{}}%
\>[B]{}\mathbf{instance}\;{}\<[11]%
\>[11]{}(\Conid{Functor}\;\Varid{f},\Conid{Applicative}\;\Varid{g})\Rightarrow \Conid{MonoPro}\;(\Conid{SISO}\;\Varid{f}\;\Varid{g})\;\mathbf{where}{}\<[E]%
\\
\>[B]{}\hsindent{5}{}\<[5]%
\>[5]{}\Varid{mpempty}\mathrel{=}\Conid{SISO}\;(\lambda \anonymous \to \Varid{pure}\;()){}\<[E]%
\\
\>[B]{}\hsindent{5}{}\<[5]%
\>[5]{}\Conid{SISO}\;\Varid{f}\star\Conid{SISO}\;\Varid{g}\mathrel{=}\Conid{SISO}\;(\Varid{zip'}\hsdot{\circ }{.}(\Varid{f}\star\Varid{g})\hsdot{\circ }{.}\Varid{unzip'}){}\<[E]%
\ColumnHook
\end{hscode}\resethooks

\noindent where \ensuremath{\Varid{zip'}\mathbin{::}\Conid{Applicative}\;\Varid{f}\Rightarrow (\Varid{f}\;\Varid{a},\Varid{f}\;\Varid{b})\to \Varid{f}\;(\Varid{a},\Varid{b})} is the applicative functor multiplication. The most basic notion of a monoidal profunctor is represented by this instance. It tells us that the input needs to be a functor instance because of \ensuremath{\Varid{unzip'}}, the functions \ensuremath{\Varid{f}} and \ensuremath{\Varid{g}} are composed in a parallel manner using the monoidal profunctor instance for \ensuremath{(\to )} and then regrouped together using the applicative (monoidal) behavior of \ensuremath{\Varid{zip'}}.

\section{Free Monoidal Profunctors}
\label{sec:freemonoidalprofunctor}

There are different theorems with different hypothesis that ensure the existence of a free monoid in a monoidal category.
Here, we follow Rivas and Jaskelioff~\cite{Jask2017} and use the following proposition to ensure the existence of the \emph{free monoidal profunctor}.

\begin{prop}[\cite{Jask2017}]
Let $(\CC, \otimes, I)$ be a monoidal category with internal homs. If $\CC$ has binary coproducts, and for each $A \in ob(\CC)$ the initial algebra for the endofunctor $I + A \otimes -$ exists, then for each A the free monoid $A^{*}$ exists and its carrier is the carrier of the initial algebra.
\end{prop}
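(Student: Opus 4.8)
The plan is to take $A^{*}$ to be the carrier of the initial algebra of the endofunctor $F = I + A \otimes -$, and to equip it with monoid structure extracted from the algebra map together with the universal property of initiality. I would first write the structure map, which by Lambek's lemma is an isomorphism, as a copairing $\mathsf{in} = [\eta, \kappa] : I + A \otimes A^{*} \to A^{*}$ with components $\eta : I \to A^{*}$ and $\kappa : A \otimes A^{*} \to A^{*}$. I would immediately set the unit to be $e := \eta$ and the generator insertion to be $\iota := \kappa \circ (A \otimes \eta) \circ \rho_{A}^{-1} : A \to A^{*}$.

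The core construction is the multiplication $m : A^{*} \otimes A^{*} \to A^{*}$, and here the hypothesis that $\CC$ is closed is essential: it makes $-\otimes A^{*}$ a left adjoint, so it preserves the coproduct appearing in $F$, and it lets us define $m$ by currying into the internal hom. I would put an $F$-algebra structure on $(A^{*} \Rightarrow A^{*})$ whose $I$-component is the name of the left unitor $\lambda_{A^{*}}$ and whose $A \otimes (A^{*} \Rightarrow A^{*})$-component is the currying of $\mathsf{in} \circ \mathsf{inr} \circ (A \otimes \mathrm{ev}) \circ \alpha$, i.e. the operation that prepends a letter. Initiality then yields a unique $F$-algebra homomorphism $\overline{m} : A^{*} \to (A^{*} \Rightarrow A^{*})$, and I would define $m$ as its uncurrying. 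Unfolding the homomorphism square gives exactly the two recursion equations $m \circ (e \otimes A^{*}) = \lambda_{A^{*}}$ and $m \circ (\kappa \otimes A^{*}) = \kappa \circ (A \otimes m) \circ \alpha$.

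With $m$ in hand I would verify the three monoid laws. The left unit law is precisely the first recursion equation, so it is free. The right unit law $m \circ (A^{*} \otimes e) = \rho_{A^{*}}$ and associativity $m \circ (m \otimes A^{*}) = m \circ (A^{*} \otimes m) \circ \alpha$ I would prove by the standard fusion technique: in each case one exhibits both sides as $F$-algebra homomorphisms out of $A^{*}$ (recurring on the first tensor factor) into a common algebra, and concludes equality from the uniqueness clause of initiality. These are the laborious steps, since each reduction feeds the recursion equations for $m$ through Mac Lane coherence to match associators and unitors; associativity is the hardest, as it nests two appeals to uniqueness.

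Finally, for freeness, given any monoid $(M, e_{M}, m_{M})$ and a map $f : A \to M$, I would turn $M$ into an $F$-algebra via $[e_{M}, m_{M} \circ (f \otimes M)] : I + A \otimes M \to M$ and let $\hat{f} : A^{*} \to M$ be the induced unique algebra homomorphism. It then remains to check that $\hat{f}$ is a monoid homomorphism satisfying $\hat{f} \circ \iota = f$, and that it is the only such morphism; the latter holds because any monoid homomorphism extending $f$ is forced to be an $F$-algebra homomorphism for this algebra and hence coincides with $\hat{f}$ by initiality. I expect the main obstacle to be the associativity law for $m$, together with the coherence bookkeeping threaded through every fusion argument; everything else is a direct application of initial-algebra recursion.
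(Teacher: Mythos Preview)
The paper does not give its own proof of this proposition: it is quoted as a result of Rivas and Jaskelioff~\cite{Jask2017} and used as a black box to guarantee that the free monoidal profunctor exists. Your sketch is correct and is precisely the standard argument one finds in that reference and elsewhere: read off the unit and a ``cons'' operation from the initial-algebra structure map, exploit closedness to place an $F$-algebra structure on the internal hom $A^{*}\Rightarrow A^{*}$ and obtain the multiplication by uncurrying the induced homomorphism, and then derive the monoid axioms and the universal property from uniqueness of initial-algebra morphisms (fusion). So there is nothing in the present paper to compare your argument against beyond the citation itself; your proposal simply supplies the proof the paper omits.
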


The category $Prof(\CC,\CC)$, when $\CC$ is a small monoidal category, is monoidal with the Day convolution $\star$ and the profunctor $I$ as its unit, and also has binary coproducts and internal homs. The least fixed point of the endofunctor $Q(X) = J + P \star X$ in $Prof(\CC,\CC)$ gives the free monoidal profunctor.

Following this definition we arrive at the following implementation of the free monoidal profunctor (see also~\cite{freeMP}). 
\begin{hscode}\SaveRestoreHook
\column{B}{@{}>{\hspre}l<{\hspost}@{}}%
\column{5}{@{}>{\hspre}l<{\hspost}@{}}%
\column{E}{@{}>{\hspre}l<{\hspost}@{}}%
\>[B]{}\mathbf{data}\;\Conid{FreeMP}\;\Varid{p}\;\Varid{s}\;\Varid{t}\;\mathbf{where}{}\<[E]%
\\
\>[B]{}\hsindent{5}{}\<[5]%
\>[5]{}\Conid{MPempty}\mathbin{::}\Varid{t}\to \Conid{FreeMP}\;\Varid{p}\;\Varid{s}\;\Varid{t}{}\<[E]%
\\
\>[B]{}\hsindent{5}{}\<[5]%
\>[5]{}\Conid{FreeMP}\mathbin{::}(\Varid{s}\to (\Varid{x},\Varid{z}))\to ((\Varid{y},\Varid{w})\to \Varid{t})\to \Varid{p}\;\Varid{x}\;\Varid{y}\to \Conid{FreeMP}\;\Varid{p}\;\Varid{z}\;\Varid{w}\to \Conid{FreeMP}\;\Varid{p}\;\Varid{s}\;\Varid{t}{}\<[E]%
\ColumnHook
\end{hscode}\resethooks

\noindent where \ensuremath{\Conid{MPempty}} corresponds to \ensuremath{\Varid{mpempty}}, and \ensuremath{\Conid{FreeMP}} is the multiplication. The multiplication will be apparent if one expands the definition of Day convolution for $P$ and $P^*$. This interface stacks profunctors, and in each layer, it provides pure functions to simulate the parallel composition nature of a monoidal profunctor.

The following functions provide the necessary functions to build the free construction on monoidal profunctors, \ensuremath{\Varid{toFreeMP}} insert a single profunctor into the free structure, and \ensuremath{\Varid{foldFreeMP}} provides a way of evaluating the structure, collapsing it into a single monoidal profunctor.

\begin{hscode}\SaveRestoreHook
\column{B}{@{}>{\hspre}l<{\hspost}@{}}%
\column{E}{@{}>{\hspre}l<{\hspost}@{}}%
\>[B]{}\Varid{toFreeMP}\mathbin{::}\Conid{Profunctor}\;\Varid{p}\Rightarrow \Varid{p}\;\Varid{s}\;\Varid{t}\to \Conid{FreeMP}\;\Varid{p}\;\Varid{s}\;\Varid{t}{}\<[E]%
\\
\>[B]{}\Varid{toFreeMP}\;\Varid{p}\mathrel{=}\Conid{FreeMP}\;\Varid{diag}\;\Varid{fst}\;\Varid{p}\;(\Conid{MPempty}\;()){}\<[E]%
\ColumnHook
\end{hscode}\resethooks

\begin{hscode}\SaveRestoreHook
\column{B}{@{}>{\hspre}l<{\hspost}@{}}%
\column{E}{@{}>{\hspre}l<{\hspost}@{}}%
\>[B]{}\Varid{foldFreeMP}\mathbin{::}(\Conid{Profunctor}\;\Varid{p},\Conid{MonoPro}\;\Varid{q})\Rightarrow (\Varid{p}\leadsto\Varid{q})\to \Conid{FreeMP}\;\Varid{p}\;\Varid{s}\;\Varid{t}\to \Varid{q}\;\Varid{s}\;\Varid{t}{}\<[E]%
\\
\>[B]{}\Varid{foldFreeMP}\;\anonymous \;(\Conid{Arr}\;\Varid{t})\mathrel{=}\Varid{dimap}\;(\mathbin{\char92 \char95 }\to ())\;(\lambda ()\to \Varid{t})\;\Varid{arrr}{}\<[E]%
\\
\>[B]{}\Varid{foldFreeMP}\;(\Conid{Prof}\;\Varid{h})\;(\Conid{FreeMP}\;\Varid{f}\;\Varid{g}\;\Varid{p}\;\Varid{mp})\mathrel{=}\Varid{dimap}\;\Varid{f}\;\Varid{g}\;((\Varid{h}\;\Varid{p})\star\Varid{foldFreeMP}\;(\Conid{Prof}\;\Varid{h})\;\Varid{mp}){}\<[E]%
\ColumnHook
\end{hscode}\resethooks

A free construction behaves like a list and, of course, \ensuremath{\Conid{MonoPro}} should provide a way to embed a plain profunctor into the free context.

\begin{hscode}\SaveRestoreHook
\column{B}{@{}>{\hspre}l<{\hspost}@{}}%
\column{25}{@{}>{\hspre}l<{\hspost}@{}}%
\column{31}{@{}>{\hspre}l<{\hspost}@{}}%
\column{E}{@{}>{\hspre}l<{\hspost}@{}}%
\>[B]{}\Varid{consMP}\mathbin{::}\Conid{Profunctor}\;\Varid{p}{}\<[25]%
\>[25]{}\Rightarrow \Varid{p}\;\Varid{a}\;\Varid{b}\to \Conid{FreeMP}\;\Varid{p}\;\Varid{s}\;\Varid{t}\to \Conid{FreeMP}\;\Varid{p}\;(\Varid{a},\Varid{s})\;(\Varid{b},\Varid{t}){}\<[E]%
\\
\>[B]{}\Varid{consMP}\;\Varid{pab}\;(\Conid{MPempty}\;\Varid{t}){}\<[31]%
\>[31]{}\mathrel{=}\Conid{FreeMP}\;\Varid{id}\;\Varid{id}\;\Varid{pab}\;(\Conid{MPempty}\;\Varid{t}){}\<[E]%
\\
\>[B]{}\Varid{consMP}\;\Varid{pab}\;(\Conid{FreeMP}\;\Varid{f}\;\Varid{g}\;\Varid{p}\;\Varid{fp}){}\<[31]%
\>[31]{}\mathrel{=}\Conid{FreeMP}\;(\Varid{id}\star\Varid{f})\;(\Varid{id}\star\Varid{g})\;\Varid{pab}\;(\Varid{consMP}\;\Varid{p}\;\Varid{fp}){}\<[E]%
\ColumnHook
\end{hscode}\resethooks

\noindent and with it, an instance of \ensuremath{\Conid{MonoPro}} for the free structure can be defined as

\begin{hscode}\SaveRestoreHook
\column{B}{@{}>{\hspre}l<{\hspost}@{}}%
\column{5}{@{}>{\hspre}l<{\hspost}@{}}%
\column{16}{@{}>{\hspre}l<{\hspost}@{}}%
\column{24}{@{}>{\hspre}l<{\hspost}@{}}%
\column{34}{@{}>{\hspre}l<{\hspost}@{}}%
\column{43}{@{}>{\hspre}l<{\hspost}@{}}%
\column{E}{@{}>{\hspre}l<{\hspost}@{}}%
\>[B]{}\mathbf{instance}\;\Conid{Profunctor}\;\Varid{p}\Rightarrow \Conid{MonoPro}\;(\Conid{FreeMP}\;\Varid{p})\;\mathbf{where}{}\<[E]%
\\
\>[B]{}\hsindent{5}{}\<[5]%
\>[5]{}\Varid{mpempty}\mathrel{=}\Conid{MPempty}\;(){}\<[E]%
\\
\>[B]{}\hsindent{5}{}\<[5]%
\>[5]{}\Conid{MPempty}\;\Varid{t}{}\<[16]%
\>[16]{}\star\Varid{q}{}\<[34]%
\>[34]{}\mathrel{=}\Varid{dimap}\;\Varid{snd}\;(\lambda \Varid{x}\to (\Varid{t},\Varid{x}))\;\Varid{q}{}\<[E]%
\\
\>[B]{}\hsindent{5}{}\<[5]%
\>[5]{}\Varid{q}{}\<[16]%
\>[16]{}\star\Conid{MPempty}\;\Varid{t}{}\<[34]%
\>[34]{}\mathrel{=}\Varid{dimap}\;\Varid{fst}\;(\lambda \Varid{x}\to (\Varid{x},\Varid{t}))\;\Varid{q}{}\<[E]%
\\
\>[B]{}\hsindent{5}{}\<[5]%
\>[5]{}(\Conid{FreeMP}\;\Varid{f}\;\Varid{g}\;\Varid{p}\;\Varid{fp}){}\<[24]%
\>[24]{}\star\Varid{fq}{}\<[34]%
\>[34]{}\mathrel{=}\Varid{dimap}\;{}\<[43]%
\>[43]{}(\Varid{assoc}\hsdot{\circ }{.}(\Varid{f}\star\Varid{id}))\;((\Varid{g}\star\Varid{id})\hsdot{\circ }{.}assoc^{-1})\;(\Varid{consMP}\;\Varid{p}\;(\Varid{fp}\star\Varid{fq})){}\<[E]%
\ColumnHook
\end{hscode}\resethooks

\noindent
where \ensuremath{\Varid{assoc}\mathbin{::}((\Varid{x},\Varid{z}),\Varid{c})\to (\Varid{z},(\Varid{x},\Varid{c}))} and \ensuremath{\Varid{associnv'}\mathbin{::}(\Varid{y},(\Varid{w},\Varid{d}))\to ((\Varid{w},\Varid{y}),\Varid{d})}. 

When \ensuremath{\Varid{p}} is an arrow, then \ensuremath{\Conid{FreeMP}\;\Varid{p}} is an arrow. In order to define this instance one needs to collapse all parallel profunctors in order to make the sequential composition. 

\begin{hscode}\SaveRestoreHook
\column{B}{@{}>{\hspre}l<{\hspost}@{}}%
\column{5}{@{}>{\hspre}l<{\hspost}@{}}%
\column{12}{@{}>{\hspre}l<{\hspost}@{}}%
\column{14}{@{}>{\hspre}l<{\hspost}@{}}%
\column{E}{@{}>{\hspre}l<{\hspost}@{}}%
\>[B]{}\mathbf{instance}\;(\Conid{MonoPro}\;\Varid{p},\Conid{Arrow}\;\Varid{p})\Rightarrow \Conid{Category}\;(\Conid{FreeMP}\;\Varid{p})\;\mathbf{where}{}\<[E]%
\\
\>[B]{}\hsindent{5}{}\<[5]%
\>[5]{}\Varid{id}{}\<[14]%
\>[14]{}\mathrel{=}\Conid{FreeMP}\;(\lambda \Varid{x}\to (\Varid{x},()))\;\Varid{fst}\;(\Varid{arr}\;\Varid{id})\;(\Conid{MPempty}\;()){}\<[E]%
\\
\>[B]{}\hsindent{5}{}\<[5]%
\>[5]{}\Varid{mp}\hsdot{\circ }{.}\Varid{mq}{}\<[14]%
\>[14]{}\mathrel{=}\Varid{toFreeMP}\;(\Varid{fromFreeMP}\;\Varid{mp}\Conid{K}.\hsdot{\circ }{.}\Varid{fromFreeMP}\;\Varid{mq}){}\<[E]%
\\[\blanklineskip]%
\>[B]{}\mathbf{instance}\;(\Conid{MonoPro}\;\Varid{p},\Conid{Arrow}\;\Varid{p})\Rightarrow \Conid{Arrow}\;(\Conid{FreeMP}\;\Varid{p})\;\mathbf{where}{}\<[E]%
\\
\>[B]{}\hsindent{5}{}\<[5]%
\>[5]{}\Varid{arr}\;\Varid{f}{}\<[12]%
\>[12]{}\mathrel{=}\Conid{FreeMP}\;(\lambda \Varid{x}\to (\Varid{x},()))\;\Varid{fst}\;(\Varid{arr}\;\Varid{f})\;(\Conid{MPempty}\;()){}\<[E]%
\\
\>[B]{}\hsindent{5}{}\<[5]%
\>[5]{}(\mathbin{***}){}\<[12]%
\>[12]{}\mathrel{=}(\star){}\<[E]%
\ColumnHook
\end{hscode}\resethooks

It is good to remember that the type class \ensuremath{\Conid{Category}} is the one that has the two methods \ensuremath{\Varid{id}} and \ensuremath{(\hsdot{\circ }{.})}, which represents the notion of a category in Haskell.

\section{Monoidal profunctor optics - Monocles}
\label{sec:monocles}


Data accessors are an essential part of functional programming. They allow reading and writing a whole data structure or parts of it \cite{Wu}. In Haskell, one needs to deal with Algebraic Data Types (ADTs) such as products (fields), sums, containers, function types, to name a few. For each of these structures, the action of handling can be a hard task and not compositional at all. To circumvent this problem, different type of data accessors were created, such as lenses, prisms, and traversables~\cite{VL, kmett}. These different notions of modular (composable) data accessors \cite{Wu} were grouped together and called \emph{optics}. Optics help to tackle the data accessor problem with the help of some category-theoretic constructions such as profunctors.

An optic is a general denotation to locate parts (or even the whole) of a data structure in which some action needs to be performed. Each optic deals with a different ADT, for example, the well-known lenses deal with product types, prisms with sum types, traversals with traversable containers, grates with function types, isos deals with any type but cannot change its shape, and so on. A general optic is a polymorphic type on a binary type constructor type with a typeclass restriction \ensuremath{\Varid{r}} on it.

\begin{hscode}\SaveRestoreHook
\column{B}{@{}>{\hspre}l<{\hspost}@{}}%
\column{E}{@{}>{\hspre}l<{\hspost}@{}}%
\>[B]{}\mathbf{type}\;\Conid{Optic}\;\Varid{r}\;\Varid{s}\;\Varid{t}\;\Varid{a}\;\Varid{b}\mathrel{=}\forall \Varid{p}\hsforall \hsdot{\circ }{.}\Varid{r}\;\Varid{p}\Rightarrow \Varid{p}\;\Varid{a}\;\Varid{b}\to \Varid{p}\;\Varid{s}\;\Varid{t}{}\<[E]%
\ColumnHook
\end{hscode}\resethooks

The idea of an optic is to have an in-depth look into get/set operations, for example, if one has a ``big'' data structure \ensuremath{\Varid{s}}, it is possible to extract a piece of it, say \ensuremath{\Varid{a}}, which can be written as a function \ensuremath{\Varid{get}\mathbin{::}\Varid{s}\to \Varid{a}}. Whereas, if one provides a ``big'' structure \ensuremath{\Varid{s}}, and a value \ensuremath{\Varid{b}} (which is a part of \ensuremath{\Varid{s}}), we can modify \ensuremath{\Varid{s}} into another ``big'' structure \ensuremath{\Varid{t}} (the structure might not change and the data still be \ensuremath{\Varid{s}}). This description is modelled by the function \ensuremath{\Varid{set}\mathbin{::}\Varid{s}\to \Varid{b}\to \Varid{t}}.

Both functions (get and set) are specializations of \ensuremath{\Conid{Optic}\;\Conid{Strong}} yielding the type
\begin{hscode}\SaveRestoreHook
\column{B}{@{}>{\hspre}l<{\hspost}@{}}%
\column{E}{@{}>{\hspre}l<{\hspost}@{}}%
\>[B]{}\mathbf{type}\;\Conid{Lens}\;\Varid{s}\;\Varid{t}\;\Varid{a}\;\Varid{b}\mathrel{=}\forall \Varid{p}\hsforall \hsdot{\circ }{.}\Conid{Strong}\;\Varid{p}\Rightarrow \Varid{p}\;\Varid{a}\;\Varid{b}\to \Varid{p}\;\Varid{s}\;\Varid{t}{}\<[E]%
\ColumnHook
\end{hscode}\resethooks

 Here, \ensuremath{\Conid{Strong}} is a profunctor dependent typeclass having the method \ensuremath{\Varid{first'}\mathbin{::}\Varid{p}\;\Varid{a}\;\Varid{b}\to \Varid{p}\;(\Varid{a},\Varid{x})\;(\Varid{b},\Varid{x})} (it also has a method \ensuremath{\Varid{second'}}). For example, one can recover a get function using a lens~\cite{Wu, connor}, and to achieve that we use the profunctor \ensuremath{\Conid{Forget}} that is just the contravariant hom-functor:

\begin{hscode}\SaveRestoreHook
\column{B}{@{}>{\hspre}l<{\hspost}@{}}%
\column{E}{@{}>{\hspre}l<{\hspost}@{}}%
\>[B]{}\mathbf{newtype}\;\Conid{Forget}\;\Varid{r}\;\Varid{a}\;\Varid{b}\mathrel{=}\Conid{Forget}\;\{\mskip1.5mu \Varid{runForget}\mathbin{::}\Varid{a}\to \Varid{r}\mskip1.5mu\}{}\<[E]%
\\[\blanklineskip]%
\>[B]{}\Varid{get}\mathbin{::}\Conid{Lens}\;\Varid{s}\;\Varid{t}\;\Varid{a}\;\Varid{b}\to \Varid{s}\to \Varid{a}{}\<[E]%
\\
\>[B]{}\Varid{get}\;\Varid{lens}\mathrel{=}\Varid{runForget}\;(\Varid{lens}\;(\Conid{Forget}\;\Varid{id})){}\<[E]%
\ColumnHook
\end{hscode}\resethooks

Lenses help to give the intuition behind this profunctorial optics machinery, but this work will solely focus on the optic derived from a monoidal profunctor with
$\otimes = \times$, which combines grates and traversals. It will be called a \emph{monocle}.

\begin{hscode}\SaveRestoreHook
\column{B}{@{}>{\hspre}l<{\hspost}@{}}%
\column{E}{@{}>{\hspre}l<{\hspost}@{}}%
\>[B]{}\mathbf{type}\;\Conid{Monocle}\;\Varid{s}\;\Varid{t}\;\Varid{a}\;\Varid{b}\mathrel{=}\forall \Varid{p}\hsforall \hsdot{\circ }{.}\Conid{MonoPro}\;\Varid{p}\Rightarrow \Varid{p}\;\Varid{a}\;\Varid{b}\to \Varid{p}\;\Varid{s}\;\Varid{t}{}\<[E]%
\ColumnHook
\end{hscode}\resethooks

A \emph{\ensuremath{\Conid{Monocle}}} locates every position of a product (tuple) type (which can be generalized to a finite vector \cite{Jaskelioff2014ART}) while a \ensuremath{\Conid{Lens}} locates parts of it. We now list some basic monocles to observe their behavior. Furthermore, one can observe that those monocles can be generalized if dependent types are used.

\begin{hscode}\SaveRestoreHook
\column{B}{@{}>{\hspre}l<{\hspost}@{}}%
\column{E}{@{}>{\hspre}l<{\hspost}@{}}%
\>[B]{}\Varid{each2}\mathbin{::}\Conid{MonoPro}\;\Varid{p}\Rightarrow \Varid{p}\;\Varid{a}\;\Varid{b}\to \Varid{p}\;(\Varid{a},\Varid{a})\;(\Varid{b},\Varid{b}){}\<[E]%
\\
\>[B]{}\Varid{each2}\;\Varid{p}\mathrel{=}\Varid{p}\star\Varid{p}{}\<[E]%
\\
\>[B]{}\Varid{each3}\mathbin{::}\Conid{MonoPro}\;\Varid{p}\Rightarrow \Varid{p}\;\Varid{a}\;\Varid{b}\to \Varid{p}\;(\Varid{a},\Varid{a},\Varid{a})\;(\Varid{b},\Varid{b},\Varid{b}){}\<[E]%
\\
\>[B]{}\Varid{each3}\;\Varid{p}\mathrel{=}\Varid{dimap}\;\Varid{flat3i}\;\Varid{flat3l}\;(\Varid{p}\star\Varid{p}\star\Varid{p}){}\<[E]%
\\
\>[B]{}\Varid{each4}\mathbin{::}\Conid{MonoPro}\;\Varid{p}\Rightarrow \Varid{p}\;\Varid{a}\;\Varid{b}\to \Varid{p}\;(\Varid{a},\Varid{a},\Varid{a},\Varid{a})\;(\Varid{b},\Varid{b},\Varid{b},\Varid{b}){}\<[E]%
\\
\>[B]{}\Varid{each4}\;\Varid{p}\mathrel{=}\Varid{dimap}\;\Varid{flat4i}\;\Varid{flat4l}\;(\Varid{p}\star\Varid{p}\star\Varid{p}\star\Varid{p}){}\<[E]%
\ColumnHook
\end{hscode}\resethooks

As one can observe, \ensuremath{\Varid{each2}} deals with parallel composition of the argument \ensuremath{\Varid{p}} with itself using the \ensuremath{\Conid{MonoPro}} interface. The focus is on tuples of size 2. The monocles \ensuremath{\Varid{each3}} and \ensuremath{\Varid{each4}} deal with tuples of size 3 and 4 and depend on the tuple flattening functions.

\begin{hscode}\SaveRestoreHook
\column{B}{@{}>{\hspre}l<{\hspost}@{}}%
\column{E}{@{}>{\hspre}l<{\hspost}@{}}%
\>[B]{}\Varid{flat3l}\mathbin{::}((\Varid{a},\Varid{b}),\Varid{c})\to (\Varid{a},\Varid{b},\Varid{c}){}\<[E]%
\\
\>[B]{}\Varid{flat3l}\;((\Varid{a},\Varid{b}),\Varid{c})\mathrel{=}(\Varid{a},\Varid{b},\Varid{c}){}\<[E]%
\\[\blanklineskip]%
\>[B]{}\Varid{flat3i}\mathbin{::}(\Varid{a},\Varid{b},\Varid{c})\to ((\Varid{a},\Varid{b}),\Varid{c}){}\<[E]%
\\
\>[B]{}\Varid{flat3i}\;(\Varid{a},\Varid{b},\Varid{c})\mathrel{=}((\Varid{a},\Varid{b}),\Varid{c}){}\<[E]%
\\[\blanklineskip]%
\>[B]{}\Varid{flat4l}\mathbin{::}(((\Varid{a},\Varid{b}),\Varid{c}),\Varid{d})\to (\Varid{a},\Varid{b},\Varid{c},\Varid{d}){}\<[E]%
\\
\>[B]{}\Varid{flat4l}\;(((\Varid{a},\Varid{b}),\Varid{c}),\Varid{d})\mathrel{=}(\Varid{a},\Varid{b},\Varid{c},\Varid{d}){}\<[E]%
\\[\blanklineskip]%
\>[B]{}\Varid{flat4i}\mathbin{::}(\Varid{a},\Varid{b},\Varid{c},\Varid{d})\to (((\Varid{a},\Varid{b}),\Varid{c}),\Varid{d}){}\<[E]%
\\
\>[B]{}\Varid{flat4i}\;(\Varid{a},\Varid{b},\Varid{c},\Varid{d})\mathrel{=}(((\Varid{a},\Varid{b}),\Varid{c}),\Varid{d}){}\<[E]%
\ColumnHook
\end{hscode}\resethooks

To have a better understanding of those Monocles we adapt the representation theorem of Jaskelioff and O'Connor~\cite{Jaskelioff2014ART} from functors to profunctors. 

\begin{theorem}[\cite{Jaskelioff2014ART}]\label{urt}
(Unary representation for profunctors) Consider an adjunction between profunctors $-^* \vdash U  : \mathcal{E} \to \mathcal{F}$, where $\mathcal{F}$ is small and $\mathcal{E}$ is a full
subcategory of $Prof(Set,Set)$, the family of profunctors $Iso_{A,B}(S,T) = (S \to A) \times (B \to T)$ gives the following isomorphism natural in $A, B$, and dinatural in $S,T$.
\[
\int_{P} UP(A,B) \to UP(S,T) \cong Iso^{*}_{A,B}(S,T),
\]
where $Iso^*$ is the free profunctor generated by $Iso$. Whenever the left-hand side end exists, the isomorphism holds.
\end{theorem}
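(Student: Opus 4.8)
The plan is to prove this the same way the earlier results (Propositions~\ref{unit}, \ref{assoc}, \ref{monprof}) are proved: by a chain of natural isomorphisms built from the Yoneda lemma and coend calculus, with the single extra ingredient that the free--forgetful adjunction $-^* \vdash U$ lets us trade the free profunctor $Iso^*$ for its generator $Iso$. The first observation I would record is that $Iso_{A,B}$ is nothing but the \emph{representable} profunctor at $(A,B)$: maps out of it detect values. Concretely, I claim
\begin{align*}
Prof(Iso_{A,B}, P) &= \int_{ST} \big((S \to A) \times (B \to T)\big) \to P(S,T) \\
&\cong \int_{ST} (S \to A) \to (B \to T) \to P(S,T) \\
&\cong \int_{T} (B \to T) \to P(A,T) \\
&\cong P(A,B),
\end{align*}
where the third line is the Yoneda lemma applied in the contravariant argument $S$ (using that $P$ is contravariant there) and the last is the Yoneda lemma in the covariant argument $T$. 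This is exactly the profunctor Yoneda lemma and is entirely routine.

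Next I would feed this into the adjunction. Since $-^* \vdash U$ with $Iso_{A,B}$ a generator in $\mathcal{F}$ and $Iso^*_{A,B}$ its free image in $\mathcal{E}$, for every $P \in \mathcal{E}$ we have
\[
\mathcal{E}(Iso^*_{A,B}, P) \cong \mathcal{F}(Iso_{A,B}, UP) \cong UP(A,B),
\]
the first isomorphism being the adjunction (natural in $P$) and the second the representability computation above applied to the underlying profunctor $UP$. Substituting this identity for the exponent of the left-hand end --- which is legitimate because the isomorphism is natural in $P$, so it commutes with the end $\int_P$ --- reduces the statement to
\[
\int_P UP(A,B) \to UP(S,T) \;\cong\; \int_P \big(\mathcal{E}(Iso^*_{A,B}, P) \to UP(S,T)\big).
\]
Now the functor $P \mapsto UP(S,T)$ is covariant on $\mathcal{E}$, so the right-hand end is precisely an instance of the (ninja) Yoneda lemma in end form, $\int_P \big(\mathcal{E}(X,P) \to G(P)\big) \cong G(X)$, with $X = Iso^*_{A,B}$ and $G(P) = UP(S,T)$. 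This collapses it to $U(Iso^*_{A,B})(S,T)$, and because $U$ only forgets structure and leaves the underlying profunctor untouched, $U(Iso^*_{A,B})(S,T) = Iso^*_{A,B}(S,T)$, which is exactly the claimed right-hand side. Finally I would check that the composite isomorphism is natural in $A,B$ and dinatural in $S,T$; each isomorphism in the chain has the required (di)naturality, so this is a bookkeeping step.

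The step I expect to be the genuine obstacle is the middle one: turning $\mathcal{E}(Iso^*_{A,B}, P)$ into $UP(A,B)$ and then discharging the end by Yoneda. It hides two assumptions that must be made explicit. First, the representability reduction $\mathcal{F}(Iso_{A,B}, UP) \cong UP(A,B)$ needs the hom-sets of $\mathcal{F}$ to be the full profunctor natural transformations and $Iso_{A,B}$ to lie in $\mathcal{F}$; this is where the hypothesis that $\mathcal{E}$ is a full subcategory of $Prof(Set,Set)$ (and $\mathcal{F}$ its small target of generators) is used. Second, the ninja-Yoneda collapse is only meaningful once the left-hand end exists, which is why the statement carries the proviso ``whenever the left-hand side end exists'' --- this sidesteps the size issue of ending over the possibly-large category $\mathcal{E}$. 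Everything else is the standard coend-calculus pattern already used throughout the paper.
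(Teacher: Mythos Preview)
The paper does not give its own proof of this theorem: it is stated with a citation to~\cite{Jaskelioff2014ART} (adapted from functors to profunctors) and used as a black box to derive the unary representation for monoidal profunctors in the following proposition. So there is nothing in the paper to compare your argument against directly.

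That said, your argument is correct and is exactly the standard proof that the cited work uses: recognise $Iso_{A,B}$ as the representable profunctor so that $\mathcal{F}(Iso_{A,B},UP)\cong UP(A,B)$ by the profunctor Yoneda lemma, transport through the adjunction to $\mathcal{E}(Iso^*_{A,B},P)\cong UP(A,B)$, and then collapse the end $\int_P \mathcal{E}(Iso^*_{A,B},P)\to UP(S,T)$ to $U(Iso^*_{A,B})(S,T)$ by the end-form Yoneda lemma. Your remarks about where the hypotheses (fullness, smallness, existence of the end) enter are also on point. In short, your proof is the intended one; the paper simply delegates it to the reference.
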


Since the free monoidal profunctor exists and is of the form \[ P^{*}(S,T) = (J + P \star P^*)(S,T),\] this theorem helps us find the unary representation for monoidal profunctors.

\begin{prop}
The unary representation for monoidal profunctors is given by the isomorphism:
\[
\int_{P} P(A,B) \to P(S,T) \cong \sum\limits_{n \in \mathbb{N}} (S \to A^n) \times (B^n \to T)
\]
where $P$ ranges over all monoidal profunctors.
\begin{proof}[Proof Sketch]
By theorem~\ref{urt}, it is enough to show that \[ U(Iso^{*}_{A,B})(S,T) \cong \sum\limits_{n \in \mathbb{N}} (S \to A^n) \times (B^n \to T).  \]
This is proven using Yoneda, the fact that Day convolution preserves coproducts, and induction on $n$. 

\end{proof}
\end{prop}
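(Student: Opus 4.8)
The plan is to reduce the statement, via Theorem~\ref{urt}, to an explicit computation of the carrier of the free monoidal profunctor generated by $Iso_{A,B}$, and then to compute that carrier by induction. Theorem~\ref{urt} identifies the left-hand side $\int_{P} P(A,B) \to P(S,T)$ (the end taken over all monoidal profunctors $P$, on which $U$ merely forgets the monoid structure) with $U(Iso^{*}_{A,B})(S,T)$, the underlying profunctor of the free monoidal profunctor $Iso^{*}_{A,B}$. Thus the entire proposition reduces to establishing
\[
  Iso^{*}_{A,B}(S,T) \cong \sum_{n \in \mathbb{N}} (S \to A^{n}) \times (B^{n} \to T).
\]

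First I would invoke the free-monoid formula already available in $Prof(Set,Set)$. Since this category is monoidal under $\star$ with unit $J$, has binary coproducts, and $\star$ preserves coproducts (being a colimit in each argument), the initial algebra of $Q(X) = J + Iso_{A,B} \star X$ has as carrier the coproduct of Day-convolution powers $\sum_{n} Iso_{A,B}^{\star n}$, with $Iso_{A,B}^{\star 0} = J$. Preservation of coproducts is precisely what lets one distribute $Iso_{A,B} \star (\sum_{n} -)$ through the sum when identifying this colimit.

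The technical core is then an induction on $n$ proving
\[
  Iso_{A,B}^{\star n}(S,T) \cong (S \to A^{n}) \times (B^{n} \to T),
\]
which I abbreviate $R_{n}(S,T)$. The base case is $Iso_{A,B}^{\star 0}(S,T) = J(S,T) \cong T \cong (S \to 1) \times (1 \to T) = R_{0}(S,T)$, using $A^{0} = B^{0} = 1$. For the step I would expand $Iso_{A,B}^{\star(n+1)} = Iso_{A,B} \star R_{n}$ through the Day-convolution coend, split the input hom as $\CC(S, a \times c) \cong \CC(S,a) \times \CC(S,c)$ by the universal property of $\times$ in $Set$, and eliminate the coend variables by the co-Yoneda (density) lemma: the input side gives $\int^{a}\CC(a,A)\times\CC(S,a) \cong (S \to A)$ and $\int^{c}\CC(c,A^{n})\times\CC(S,c) \cong (S \to A^{n})$, while the output side gives $\int^{b,d}\CC(B,b)\times\CC(B^{n},d)\times\CC(b \times d,T) \cong (B^{n+1} \to T)$. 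Recombining $(S \to A) \times (S \to A^{n}) \cong (S \to A^{n+1})$ by the product universal property closes the step at $R_{n+1}$. Summing over $n$ then yields the displayed isomorphism.

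I expect the inductive step $Iso_{A,B} \star R_{n} \cong R_{n+1}$ to be the main obstacle: it requires applying co-Yoneda with the correct variance on the contravariant input slot and the covariant output slot simultaneously, and careful bookkeeping of where the universal property of the product is used (once to break $\CC(S,a \times c)$ apart and once to reassemble $(S \to A^{n+1})$). A secondary point to be careful about is the justification that the free monoid's carrier is genuinely $\sum_{n} Iso_{A,B}^{\star n}$; this follows from $\star$ preserving the coproducts in the initial-algebra chain, and can be dispatched by appeal to the free-monoid proposition already cited.
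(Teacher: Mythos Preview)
Your proposal is correct and follows essentially the same approach as the paper's own sketch: reduce via Theorem~\ref{urt} to computing $U(Iso^{*}_{A,B})(S,T)$, then unfold the free-monoid carrier as $\sum_{n} Iso_{A,B}^{\star n}$ using that Day convolution preserves coproducts, and finish by induction on $n$ with co-Yoneda eliminating the coend variables. The paper's proof is only a sketch naming exactly these ingredients (Yoneda, coproduct preservation, induction), and you have filled in the expected details accurately.
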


The right-hand side of the above isomorphism tells us that the source type can be split as a finite number of copies of a type \ensuremath{\Varid{a}}, and a finite number of copies of a type \ensuremath{\Varid{b}} can be amalgamated in a type \ensuremath{\Varid{t}}, which gives us a semantic for \ensuremath{\Conid{Monocles}} such as \ensuremath{\Varid{each2}}. Using categorical tools to reason about \ensuremath{\Varid{each2}} (the other functions can be understood similarly), one can rewrite it as follows.

\begin{hscode}\SaveRestoreHook
\column{B}{@{}>{\hspre}l<{\hspost}@{}}%
\column{E}{@{}>{\hspre}l<{\hspost}@{}}%
\>[B]{}\Varid{each2}\mathbin{::}\Conid{MonoPro}\;\Varid{p}\Rightarrow (\Varid{s}\to (\Varid{a},\Varid{a}))\to ((\Varid{b},\Varid{b})\to \Varid{t})\to \Varid{p}\;\Varid{a}\;\Varid{b}\to \Varid{p}\;\Varid{s}\;\Varid{t}{}\<[E]%
\\
\>[B]{}\Varid{each2}\;\Varid{f}\;\Varid{g}\;\Varid{p}\mathrel{=}\Varid{dimap}\;\Varid{f}\;\Varid{g}\;(\Varid{p}\star\Varid{p}){}\<[E]%
\ColumnHook
\end{hscode}\resethooks

Since theorem~\ref{urt} holds, one can interchangeably use the \ensuremath{\Conid{Monocle}} or a more easy to reason type as 
\ensuremath{(\Varid{s}\to (\Varid{a},\Varid{a}),(\Varid{b},\Varid{b})\to \Varid{t})}, but the limitation of the latter is obvious.

Actions can now be performed on a monocle, given the desired location; one can read/write any product (tuple) type. 

\begin{hscode}\SaveRestoreHook
\column{B}{@{}>{\hspre}l<{\hspost}@{}}%
\column{E}{@{}>{\hspre}l<{\hspost}@{}}%
\>[B]{}\Varid{foldOf}\mathbin{::}\Conid{Monoid}\;\Varid{a}\Rightarrow \Conid{Monocle}\;\Varid{s}\;\Varid{t}\;\Varid{a}\;\Varid{b}\to \Varid{s}\to \Varid{a}{}\<[E]%
\\
\>[B]{}\Varid{foldOf}\;\Varid{monocle}\mathrel{=}\Varid{runForget}\;(\Varid{monocle}\;(\Conid{Forget}\;\Varid{id})){}\<[E]%
\ColumnHook
\end{hscode}\resethooks

This action tells that given a \ensuremath{\Conid{Monocle}} (location) one can monoidally collect many parts \ensuremath{\Varid{a}} from the big structure \ensuremath{\Varid{s}} (in this case, tuples). For example,
\begin{hscode}\SaveRestoreHook
\column{B}{@{}>{\hspre}l<{\hspost}@{}}%
\column{E}{@{}>{\hspre}l<{\hspost}@{}}%
\>[B]{}\Varid{foldOf}\;\Varid{each3}\mathbin{::}\Conid{Monoid}\;\Varid{a}\Rightarrow (\Varid{a},\Varid{a},\Varid{a})\to \Varid{a}{}\<[E]%
\ColumnHook
\end{hscode}\resethooks
 behaves in the same way as the function \ensuremath{\Varid{fold}} does with lists, its evaluation on the value \ensuremath{(\text{\ttfamily \char34 AA\char34},\text{\ttfamily \char34 BB\char34},\text{\ttfamily \char34 CC\char34})} gives \ensuremath{\text{\ttfamily \char34 AABBCC\char34}} as expected. The list function \ensuremath{\Varid{foldMap}} has a corresponding Monocle called \ensuremath{\Varid{foldMapOf}},

\begin{hscode}\SaveRestoreHook
\column{B}{@{}>{\hspre}l<{\hspost}@{}}%
\column{E}{@{}>{\hspre}l<{\hspost}@{}}%
\>[B]{}\Varid{foldMapOf}\mathbin{::}\Conid{Monoid}\;\Varid{r}\Rightarrow \Conid{Monocle}\;\Varid{s}\;\Varid{t}\;\Varid{a}\;\Varid{b}\to (\Varid{a}\to \Varid{r})\to \Varid{s}\to \Varid{r}{}\<[E]%
\\
\>[B]{}\Varid{foldMapOf}\;\Varid{monocle}\;\Varid{f}\mathrel{=}\Varid{runForget}\;(\Varid{monocle}\;(\Conid{Forget}\;\Varid{f})){}\<[E]%
\ColumnHook
\end{hscode}\resethooks

\noindent which can locate all elements of a 3-element tuple with the expression \begin{hscode}\SaveRestoreHook
\column{B}{@{}>{\hspre}l<{\hspost}@{}}%
\column{E}{@{}>{\hspre}l<{\hspost}@{}}%
\>[B]{}\Varid{foldMapOf}\;\Varid{each3}\mathbin{::}\Conid{Monoid}\;\Varid{r}\Rightarrow (\Varid{a}\to \Varid{r})\to (\Varid{a},\Varid{a},\Varid{a})\to \Varid{r}{}\<[E]%
\ColumnHook
\end{hscode}\resethooks
 as expected.

Every profunctorial optic has a so-called van Laarhoven \cite{OconnorBlg} functorial representation. For a monocle, this representation can be obtained by the following function. 

\begin{hscode}\SaveRestoreHook
\column{B}{@{}>{\hspre}l<{\hspost}@{}}%
\column{E}{@{}>{\hspre}l<{\hspost}@{}}%
\>[B]{}\Varid{convolute}\mathbin{::}(\Conid{Applicative}\;\Varid{g},\Conid{Functor}\;\Varid{f})\Rightarrow \Conid{Monocle}\;\Varid{s}\;\Varid{t}\;\Varid{a}\;\Varid{b}\to (\Varid{f}\;\Varid{a}\to \Varid{g}\;\Varid{b})\to \Varid{f}\;\Varid{s}\to \Varid{g}\;\Varid{t}{}\<[E]%
\\
\>[B]{}\Varid{convolute}\;\Varid{monocle}\;\Varid{f}\mathrel{=}\Varid{unSISO}\;(\Varid{monocle}\;(\Conid{SISO}\;\Varid{f})){}\<[E]%
\ColumnHook
\end{hscode}\resethooks

If we specialize \ensuremath{\Varid{convolute}} using the identity functor \ensuremath{\Varid{f}\mathrel{=}\Conid{Id}}, one gets the definition of a \ensuremath{\Conid{Traversal}}, which is a defined in the lens package \cite{kmett}.

\begin{hscode}\SaveRestoreHook
\column{B}{@{}>{\hspre}l<{\hspost}@{}}%
\column{E}{@{}>{\hspre}l<{\hspost}@{}}%
\>[B]{}\Varid{traverseOf}\mathbin{::}\Conid{Applicative}\;\Varid{g}\Rightarrow \Conid{Monocle}\;\Varid{s}\;\Varid{t}\;\Varid{a}\;\Varid{b}\to (\Conid{Id}\;\Varid{a}\to \Varid{g}\;\Varid{b})\to (\Conid{Id}\;\Varid{s}\to \Varid{g}\;\Varid{t}){}\<[E]%
\\
\>[B]{}\Varid{traverseOf}\;\Varid{monocle}\mathrel{=}\Varid{convolute}\;\Varid{monocle}{}\<[E]%
\ColumnHook
\end{hscode}\resethooks

One can specialize \ensuremath{\Varid{convolute}} using the applicative functor \ensuremath{\Varid{g}\mathrel{=}\Conid{Id}}, to get the van Laarhoven representation for grates (which depends on a Closed type class of Profunctors) \cite{connor}.

\begin{hscode}\SaveRestoreHook
\column{B}{@{}>{\hspre}l<{\hspost}@{}}%
\column{3}{@{}>{\hspre}l<{\hspost}@{}}%
\column{E}{@{}>{\hspre}l<{\hspost}@{}}%
\>[B]{}\mathbf{class}\;\Conid{Profunctor}\;\Varid{p}\Rightarrow \Conid{Closed}\;\Varid{p}\;\mathbf{where}{}\<[E]%
\\
\>[B]{}\hsindent{3}{}\<[3]%
\>[3]{}\Varid{closed}\mathbin{::}\Varid{p}\;\Varid{a}\;\Varid{b}\to \Varid{p}\;(\Varid{x}\to \Varid{a})\;(\Varid{x}\to \Varid{b}){}\<[E]%
\\[\blanklineskip]%
\>[B]{}\Varid{zipFWithOf}\mathbin{::}\Conid{Functor}\;\Varid{f}\Rightarrow \Conid{Monocle}\;\Varid{s}\;\Varid{t}\;\Varid{a}\;\Varid{b}\to (\Varid{f}\;\Varid{a}\to \Conid{Id}\;\Varid{b})\to (\Varid{f}\;\Varid{s}\to \Conid{Id}\;\Varid{t}){}\<[E]%
\\
\>[B]{}\Varid{zipFWithOf}\;\Varid{monocle}\mathrel{=}\Varid{convolute}\;\Varid{monocle}{}\<[E]%
\ColumnHook
\end{hscode}\resethooks
 
Monoidal profunctors with $\otimes = \times$ capture the essence of a grate and a traversal. Grates have a structured contravariant part (input) while traversals, the covariant one (output), while a monocle has both structures.

\section{Effectful Monoidal Profunctors}
\label{sec:effectfulmonoidprofunctor}

The typeclass for monoidal profunctors \ensuremath{\Conid{MonoPro}} is defined in terms of a profunctor \ensuremath{\Varid{p}} over the (fictitious) base category of Haskell types and functions usually known as \emph{Hask}.
However, the Day convolution allows us to use morphisms from other categories, instead of using Hask everywhere. This section presents a generalization of the class \ensuremath{\Conid{MonoPro}} which allows to use morphisms from other categories. We illustrate its use by applying it to morphisms from a Kleisli category, hence allowing effects to be lifted into the structure. The profunctor class will also have a modified form to lift two abstract morphisms instead of pure functions.

\begin{hscode}\SaveRestoreHook
\column{B}{@{}>{\hspre}l<{\hspost}@{}}%
\column{5}{@{}>{\hspre}l<{\hspost}@{}}%
\column{E}{@{}>{\hspre}l<{\hspost}@{}}%
\>[B]{}\mathbf{class}\;\Conid{Category}\;\Varid{k}\Rightarrow \Conid{CatProfunctor}\;\Varid{k}\;\Varid{p}\;\mathbf{where}{}\<[E]%
\\
\>[B]{}\hsindent{5}{}\<[5]%
\>[5]{}\Varid{catdimap}\mathbin{::}\Varid{k}\;\Varid{a}\;\Varid{b}\to \Varid{k}\;\Varid{c}\;\Varid{d}\to \Varid{p}\;\Varid{b}\;\Varid{c}\to \Varid{p}\;\Varid{a}\;\Varid{d}{}\<[E]%
\ColumnHook
\end{hscode}\resethooks

A \emph{\ensuremath{\Conid{CatProfunctor}}} represents a profunctor working with morphisms on an arbitrary category $\CC$ instead of $Hask$, and provides an interface to lift two of those abstract morphisms defined by the binary type constructor \ensuremath{\Varid{k}}. This new class needs to be a multi-parameter type class because of the added constraint \ensuremath{\Conid{Category}}. 

\begin{hscode}\SaveRestoreHook
\column{B}{@{}>{\hspre}l<{\hspost}@{}}%
\column{5}{@{}>{\hspre}l<{\hspost}@{}}%
\column{16}{@{}>{\hspre}l<{\hspost}@{}}%
\column{E}{@{}>{\hspre}l<{\hspost}@{}}%
\>[B]{}\mathbf{class}\;(\Conid{Category}\;\Varid{k},\Conid{CatProfunctor}\;\Varid{k}\;\Varid{p})\Rightarrow \Conid{CatMonoPro}\;\Varid{k}\;\Varid{p}\mid \Varid{p}\to \Varid{k}\;\mathbf{where}{}\<[E]%
\\
\>[B]{}\hsindent{5}{}\<[5]%
\>[5]{}\Varid{cmpunit}{}\<[16]%
\>[16]{}\mathbin{::}\Varid{k}\;\Varid{s}\;()\to \Varid{k}\;()\;\Varid{t}\to \Varid{p}\;\Varid{s}\;\Varid{t}{}\<[E]%
\\
\>[B]{}\hsindent{5}{}\<[5]%
\>[5]{}\Varid{convolute}{}\<[16]%
\>[16]{}\mathbin{::}\Varid{k}\;\Varid{s}\;(\Varid{a},\Varid{c})\to \Varid{k}\;(\Varid{b},\Varid{d})\;\Varid{t}\to \Varid{p}\;\Varid{a}\;\Varid{b}\to \Varid{p}\;\Varid{c}\;\Varid{d}\to \Varid{p}\;\Varid{s}\;\Varid{t}{}\<[E]%
\ColumnHook
\end{hscode}\resethooks

It is good to remember that in the \emph{\ensuremath{\Conid{CatMonoPro}}} class, the type of \ensuremath{\Varid{cmunit}} is isomorphic to \ensuremath{\Varid{p}\;()\;()}, and the type of \ensuremath{\Varid{convolute}} is isomorphic to \ensuremath{\Varid{p}\;(\Varid{a},\Varid{c})\;(\Varid{b},\Varid{d})}. The functional dependency \ensuremath{\Varid{p}\to \Varid{k}} allows to write the unit \ensuremath{\Varid{cmpempty}} having the same role as \ensuremath{\Varid{mpempty}}, and \ensuremath{\mathrel{\star\star}} also having the same role as \ensuremath{\Conid{MonoPro}}'s \ensuremath{\star} satisfying the same laws as seen before. 

\begin{hscode}\SaveRestoreHook
\column{B}{@{}>{\hspre}l<{\hspost}@{}}%
\column{E}{@{}>{\hspre}l<{\hspost}@{}}%
\>[B]{}\Varid{cmpempty}\mathbin{::}\Varid{p}\;()\;(){}\<[E]%
\\
\>[B]{}\Varid{cmpempty}\mathrel{=}\Varid{unitmp}\;\Varid{id}\;\Varid{id}{}\<[E]%
\\[\blanklineskip]%
\>[B]{}(\mathrel{\star\star})\mathbin{::}\Conid{CatMonoPro}\;\Varid{k}\;\Varid{p}\Rightarrow \Varid{p}\;\Varid{a}\;\Varid{b}\to \Varid{p}\;\Varid{c}\;\Varid{d}\to \Varid{p}\;(\Varid{a},\Varid{c})\;(\Varid{b},\Varid{d}){}\<[E]%
\\
\>[B]{}\Varid{p}\mathrel{\star\star}\Varid{q}\mathrel{=}\Varid{convolute}\;\Varid{id}\;\Varid{id}\;\Varid{p}\;\Varid{q}{}\<[E]%
\ColumnHook
\end{hscode}\resethooks

As an example, one can work with \ensuremath{\Conid{CatProfunctor}} and \ensuremath{\Conid{CatMonoPro}} alongside a Kleisli arrow. That is, objects are types but morphisms are Kleisli arrows. The \ensuremath{\Conid{CatProfunctor}} instance in this example will permit computations to be lifted covariantly and contravariantly. The \ensuremath{\Conid{CatMonoPro}} gives a convolutional effect for computations and not just pure functions (as in \ensuremath{\Conid{MonoPro}}'s \ensuremath{\Varid{rlmap}}). 

The Kleisli arrow is just a wrapped type:

\begin{hscode}\SaveRestoreHook
\column{B}{@{}>{\hspre}l<{\hspost}@{}}%
\column{E}{@{}>{\hspre}l<{\hspost}@{}}%
\>[B]{}\mathbf{newtype}\;\Conid{Kleisli}\;\Varid{m}\;\Varid{a}\;\Varid{b}\mathrel{=}\Conid{Kleisli}\;\{\mskip1.5mu \Varid{runKleisli}\mathbin{::}\Varid{a}\to \Varid{m}\;\Varid{b}\mskip1.5mu\}{}\<[E]%
\ColumnHook
\end{hscode}\resethooks

\noindent having a lawful \ensuremath{\Conid{Category}} instance as follows. 

\begin{hscode}\SaveRestoreHook
\column{B}{@{}>{\hspre}l<{\hspost}@{}}%
\column{5}{@{}>{\hspre}l<{\hspost}@{}}%
\column{E}{@{}>{\hspre}l<{\hspost}@{}}%
\>[B]{}\mathbf{instance}\;\Conid{Monad}\;\Varid{m}\Rightarrow \Conid{Category}\;(\Conid{Kleisli}\;\Varid{m})\;\mathbf{where}{}\<[E]%
\\
\>[B]{}\hsindent{5}{}\<[5]%
\>[5]{}\Varid{id}\mathrel{=}\Conid{Kleisli}\;\Varid{return}{}\<[E]%
\\
\>[B]{}\hsindent{5}{}\<[5]%
\>[5]{}(\Conid{Kleisli}\;\Varid{bmc})\hsdot{\circ }{.}(\Conid{Kleisli}\;\Varid{amb})\mathrel{=}\Conid{Kleisli}\;(\lambda \Varid{a}\to (\Varid{amb}\;\Varid{a})\bind \Varid{bmc}){}\<[E]%
\ColumnHook
\end{hscode}\resethooks

A datatype called \ensuremath{\Conid{Lift}} is a \ensuremath{\Conid{CatProfunctor}} with respect to the Kleisli arrow.

\begin{hscode}\SaveRestoreHook
\column{B}{@{}>{\hspre}l<{\hspost}@{}}%
\column{E}{@{}>{\hspre}l<{\hspost}@{}}%
\>[B]{}\mathbf{newtype}\;\Conid{Lift}\;\Varid{t}\;\Varid{m}\;\Varid{a}\;\Varid{b}\mathrel{=}\Conid{Lift}\;\{\mskip1.5mu \Varid{runLift}\mathbin{::}\Varid{m}\;\Varid{a}\to \Varid{t}\;\Varid{m}\;\Varid{b}\mskip1.5mu\}{}\<[E]%
\ColumnHook
\end{hscode}\resethooks
This polymorphic type represents a general version of the function \ensuremath{\Varid{lift}} used to lift monadic computations into a monad transformer~\cite{mtl}. A monad transformer is a way to stack two or more monads together in order to enable more than one effectful computation together~\cite{Jones93composingmonads,mntr}. By packing lift into a profunctor concerning the Kleisli arrow, one gets ways to precompose and post-compose computations with a monad transformer's inner monad. For example, suppose a monad transformer \ensuremath{\Conid{FooT}\;\Varid{t}\;\Varid{m}} (where \ensuremath{\Varid{m}} is a monad). It is possible to use \ensuremath{\Varid{catdimap}} to compose effectful computations in \ensuremath{\Varid{m}} having its results in the monad transformer.

\begin{hscode}\SaveRestoreHook
\column{B}{@{}>{\hspre}l<{\hspost}@{}}%
\column{9}{@{}>{\hspre}l<{\hspost}@{}}%
\column{68}{@{}>{\hspre}l<{\hspost}@{}}%
\column{73}{@{}>{\hspre}l<{\hspost}@{}}%
\column{E}{@{}>{\hspre}l<{\hspost}@{}}%
\>[B]{}\mathbf{instance}\;(\Conid{MonadT}\;\Varid{t},\Conid{Monad}\;\Varid{m},\Conid{Monad}\;(\Varid{t}\;\Varid{m}))\Rightarrow \Conid{CatProfunctor}\;(\Conid{Kleisli}\;\Varid{m})\;(\Conid{Lift}\;\Varid{t}\;\Varid{m})\;\mathbf{where}{}\<[E]%
\\
\>[B]{}\hsindent{9}{}\<[9]%
\>[9]{}\Varid{catdimap}\;(\Conid{Kleisli}\;\Varid{f})\;(\Conid{Kleisli}\;\Varid{g})\;(\Conid{Lift}\;\Varid{h})\mathrel{=}\Conid{Lift}\mathbin{\$}\lambda \Varid{ma}\to {}\<[68]%
\>[68]{}\mathbf{let}\;{}\<[73]%
\>[73]{}\Varid{k}\mathrel{=}\Varid{h}\;(\Varid{ma}\bind \Varid{f}){}\<[E]%
\\
\>[73]{}\Varid{l}\mathrel{=}\Varid{lift}\hsdot{\circ }{.}\Varid{g}{}\<[E]%
\\
\>[68]{}\mathbf{in}\;\Varid{k}\bind \Varid{l}{}\<[E]%
\ColumnHook
\end{hscode}\resethooks

As an important note, \ensuremath{\Conid{Lift}} is a \ensuremath{\Conid{SISO}} with \ensuremath{\Varid{f}\mathrel{=}\Varid{m}}, and \ensuremath{\Varid{g}\mathrel{=}\Varid{t}\;\Varid{m}}. For a plain profunctor, \ensuremath{\Varid{m}} works only with a \ensuremath{\Conid{Functor}}, \ensuremath{\Varid{t}} need not be a monad transformer, and \ensuremath{\Varid{t}\;\Varid{m}} needs only to be an \ensuremath{\Conid{Applicative}}. Hence this instance is substantially different from the mentioned one.

To work with a \ensuremath{\Conid{CatMonoPro}} instance with respect to the Kleisli arrow, a notion of reordering effects (commutativity) is needed. 

\begin{hscode}\SaveRestoreHook
\column{B}{@{}>{\hspre}l<{\hspost}@{}}%
\column{5}{@{}>{\hspre}l<{\hspost}@{}}%
\column{9}{@{}>{\hspre}l<{\hspost}@{}}%
\column{13}{@{}>{\hspre}l<{\hspost}@{}}%
\column{28}{@{}>{\hspre}l<{\hspost}@{}}%
\column{E}{@{}>{\hspre}l<{\hspost}@{}}%
\>[B]{}\mathbf{instance}\;(\Conid{CommT}\;\Varid{t},\Conid{Traversable}\;\Varid{m},\Conid{Monad}\;\Varid{m},\Conid{Monad}\;(\Varid{t}\;\Varid{m}))\Rightarrow {}\<[E]%
\\
\>[B]{}\hsindent{5}{}\<[5]%
\>[5]{}\Conid{CatMonoPro}\;(\Conid{Kleisli}\;\Varid{m})\;(\Conid{Lift}\;\Varid{t}\;\Varid{m})\;\mathbf{where}{}\<[E]%
\\
\>[5]{}\hsindent{4}{}\<[9]%
\>[9]{}\Varid{cmpunit}\;(\Conid{Kleisli}\;\Varid{f})\;(\Conid{Kleisli}\;\Varid{g})\mathrel{=}\Conid{Lift}\;(\lambda \Varid{m}\to \Varid{lift}\;(\Varid{m}\bind \Varid{f}\sequ \Varid{g}\;())){}\<[E]%
\\
\>[5]{}\hsindent{4}{}\<[9]%
\>[9]{}\Varid{convolute}\;(\Conid{Kleisli}\;\Varid{f})\;(\Conid{Kleisli}\;\Varid{g})\;(\Conid{Lift}\;\Varid{h})\;(\Conid{Lift}\;\Varid{l})\mathrel{=}{}\<[E]%
\\
\>[9]{}\hsindent{4}{}\<[13]%
\>[13]{}\Conid{Lift}\mathbin{\$}\lambda \Varid{ms}\to {}\<[28]%
\>[28]{}\mathbf{let}\;(\Varid{ma},\Varid{mc})\mathrel{=}\Varid{unzip'}\;(\Varid{ms}\bind \Varid{f}){}\<[E]%
\\
\>[28]{}\mathbf{in}\;\Varid{comm}\;(\Varid{fmap}\;\Varid{g}\;(\Varid{zip'}\;((\Varid{h}\;\Varid{ma}),(\Varid{l}\;\Varid{mc})))){}\<[E]%
\ColumnHook
\end{hscode}\resethooks

After the \ensuremath{\Varid{fmap}} of function \ensuremath{\Varid{g}}, the remaining expression will have the type \ensuremath{\Varid{t}\;\Varid{m}\;(\Varid{m}\;\Varid{a})}. This can be fixed by building a class giving a function \ensuremath{\Varid{comm}\mathbin{::}(\Conid{Monad}\;\Varid{m},\Conid{Traversable}\;\Varid{m})\Rightarrow \Varid{t}\;\Varid{m}\;(\Varid{m}\;\Varid{a})\to \Varid{t}\;\Varid{m}\;\Varid{a}} to reorder those effects. A traversable instance is used here to provide that swap but other commutativity notions~\cite{Jones93composingmonads} may be used. 

\begin{hscode}\SaveRestoreHook
\column{B}{@{}>{\hspre}l<{\hspost}@{}}%
\column{5}{@{}>{\hspre}l<{\hspost}@{}}%
\column{E}{@{}>{\hspre}l<{\hspost}@{}}%
\>[B]{}\mathbf{class}\;\Conid{MonadT}\;\Varid{t}\Rightarrow \Conid{CommT}\;\Varid{t}\;\mathbf{where}{}\<[E]%
\\
\>[B]{}\hsindent{5}{}\<[5]%
\>[5]{}\Varid{comm}\mathbin{::}(\Conid{Monad}\;\Varid{m},\Conid{Traversable}\;\Varid{m})\Rightarrow \Varid{t}\;\Varid{m}\;(\Varid{m}\;\Varid{a})\to \Varid{t}\;\Varid{m}\;\Varid{a}{}\<[E]%
\ColumnHook
\end{hscode}\resethooks

This setup provides a use of the maybe monad transformer \ensuremath{\Conid{MaybeT}} with 
a monad like \ensuremath{\Conid{Writer}} which is traversable (all necessary type class instances can be found in the \text{\ttfamily mtl}~\cite{mtl}).

\begin{hscode}\SaveRestoreHook
\column{B}{@{}>{\hspre}l<{\hspost}@{}}%
\column{5}{@{}>{\hspre}l<{\hspost}@{}}%
\column{E}{@{}>{\hspre}l<{\hspost}@{}}%
\>[B]{}\mathbf{instance}\;\Conid{CommT}\;\Conid{MaybeT}\;\mathbf{where}{}\<[E]%
\\
\>[B]{}\hsindent{5}{}\<[5]%
\>[5]{}\Varid{comm}\;(\Conid{MaybeT}\;\Varid{mna})\mathrel{=}\Conid{MaybeT}\;(\Varid{mna}\bind \Varid{sequence}){}\<[E]%
\ColumnHook
\end{hscode}\resethooks

Using the monoidal profunctor \ensuremath{\Conid{Lift}} helps us deal with multiple monads at once, enabling us to deal with product types' computations. Consider the effectful function \ensuremath{\Varid{lsplit}}, in writer monad, that splits in two a list of \ensuremath{\Conid{String}} (or any Ord instance) by its order (in this case, lexicographical). One list for elements less or equal to the head, and the other has bigger elements than the head. The effect is logging, telling what the function is doing for debugging purposes.

\begin{hscode}\SaveRestoreHook
\column{B}{@{}>{\hspre}l<{\hspost}@{}}%
\column{5}{@{}>{\hspre}l<{\hspost}@{}}%
\column{76}{@{}>{\hspre}l<{\hspost}@{}}%
\column{E}{@{}>{\hspre}l<{\hspost}@{}}%
\>[B]{}\Varid{lsplit}\mathbin{::}[\mskip1.5mu \Conid{String}\mskip1.5mu]\to \Conid{Writer}\;[\mskip1.5mu \Conid{String}\mskip1.5mu]\;([\mskip1.5mu \Conid{String}\mskip1.5mu],[\mskip1.5mu \Conid{String}\mskip1.5mu]){}\<[E]%
\\
\>[B]{}\Varid{lsplit}\;(\Varid{z}\mathbin{:}\Varid{zs})\mathrel{=}\mathbf{do}{}\<[E]%
\\
\>[B]{}\hsindent{5}{}\<[5]%
\>[5]{}\Varid{xs}\leftarrow \Varid{return}\;(\Varid{filter}\;(\mathbin{<}\Varid{z})\;\Varid{zs}){}\<[E]%
\\
\>[B]{}\hsindent{5}{}\<[5]%
\>[5]{}\Varid{ys}\leftarrow \Varid{return}\;(\Varid{filter}\;(\geq \Varid{z})\;\Varid{zs}){}\<[E]%
\\
\>[B]{}\hsindent{5}{}\<[5]%
\>[5]{}\Varid{tell}\;[\mskip1.5mu \text{\ttfamily \char34 Splitting:~\char34}\plus \Varid{show}\;\Varid{zs}\plus \text{\ttfamily \char34 ~into~\char34}\plus \Varid{show}\;\Varid{xs}\plus \text{\ttfamily \char34 ,~\char34}{}\<[76]%
\>[76]{}\plus \Varid{show}\;\Varid{ys}\mskip1.5mu]{}\<[E]%
\\
\>[B]{}\hsindent{5}{}\<[5]%
\>[5]{}\Varid{return}\;(\Varid{xs},\Varid{ys}){}\<[E]%
\ColumnHook
\end{hscode}\resethooks

The function \ensuremath{\Varid{rsplit}} below just concats two lists and logs this action.

\begin{hscode}\SaveRestoreHook
\column{B}{@{}>{\hspre}l<{\hspost}@{}}%
\column{5}{@{}>{\hspre}l<{\hspost}@{}}%
\column{35}{@{}>{\hspre}l<{\hspost}@{}}%
\column{E}{@{}>{\hspre}l<{\hspost}@{}}%
\>[B]{}\Varid{rsplit}\mathbin{::}\Conid{String}\to ([\mskip1.5mu \Conid{String}\mskip1.5mu],[\mskip1.5mu \Conid{String}\mskip1.5mu])\to \Conid{Writer}\;[\mskip1.5mu \Conid{String}\mskip1.5mu]\;[\mskip1.5mu \Conid{String}\mskip1.5mu]{}\<[E]%
\\
\>[B]{}\Varid{rsplit}\;\Varid{l}\;(\Varid{xs},\Varid{ys})\mathrel{=}\mathbf{do}{}\<[E]%
\\
\>[B]{}\hsindent{5}{}\<[5]%
\>[5]{}\Varid{tell}\;[\mskip1.5mu \text{\ttfamily \char34 Merging:~\char34}\plus \Varid{show}\;\Varid{xs}{}\<[35]%
\>[35]{}\plus \text{\ttfamily \char34 ,~\char34}\plus \Varid{l}\plus \text{\ttfamily \char34 ,~and~\char34}\plus \Varid{show}\;\Varid{ys}\mskip1.5mu]{}\<[E]%
\\
\>[B]{}\hsindent{5}{}\<[5]%
\>[5]{}\Varid{return}\;(\Varid{xs}\plus [\mskip1.5mu \Varid{l}\mskip1.5mu]\plus \Varid{ys}){}\<[E]%
\ColumnHook
\end{hscode}\resethooks

A quicksort can be logged and stopped if an invalid value is found while keeping the logs of what the algorithm did (in this case an empty string). The instance of \ensuremath{\Conid{CatMonoPro}} helps with the splitting, and merging the sorted results inside this \ensuremath{\Conid{MaybeT}}/\ensuremath{\Conid{Writer}} context easily.

\begin{hscode}\SaveRestoreHook
\column{B}{@{}>{\hspre}l<{\hspost}@{}}%
\column{5}{@{}>{\hspre}l<{\hspost}@{}}%
\column{16}{@{}>{\hspre}l<{\hspost}@{}}%
\column{28}{@{}>{\hspre}l<{\hspost}@{}}%
\column{31}{@{}>{\hspre}l<{\hspost}@{}}%
\column{E}{@{}>{\hspre}l<{\hspost}@{}}%
\>[B]{}\Varid{qsort}\mathbin{::}[\mskip1.5mu \Conid{String}\mskip1.5mu]\to \Conid{MaybeT}\;(\Conid{Writer}\;[\mskip1.5mu \Conid{String}\mskip1.5mu])\;[\mskip1.5mu \Conid{String}\mskip1.5mu]{}\<[E]%
\\
\>[B]{}\Varid{qsort}\;[\mskip1.5mu \mskip1.5mu]\mathrel{=}\Varid{return}\;[\mskip1.5mu \mskip1.5mu]{}\<[E]%
\\
\>[B]{}\Varid{qsort}\;\Varid{xs}\mathrel{=}\mathbf{do}{}\<[E]%
\\
\>[B]{}\hsindent{5}{}\<[5]%
\>[5]{}\Varid{guard}\;(\Varid{head}\;\Varid{xs}\not\equiv \text{\ttfamily \char34 \char34}){}\<[E]%
\\
\>[B]{}\hsindent{5}{}\<[5]%
\>[5]{}(\Varid{ls},\Varid{rs}){}\<[16]%
\>[16]{}\leftarrow \Varid{runLift}\;{}\<[28]%
\>[28]{}(\Varid{lconvolute}\;(\Conid{Kleisli}\;\Varid{lsplit})\;\Varid{lift'}\;\Varid{lift'})\;(\Varid{return}\;\Varid{xs}){}\<[E]%
\\
\>[B]{}\hsindent{5}{}\<[5]%
\>[5]{}(\Varid{ls'},\Varid{rs'}){}\<[16]%
\>[16]{}\leftarrow \Varid{runKleisli}\;{}\<[31]%
\>[31]{}((\Conid{Kleisli}\;\Varid{qsort})\star(\Conid{Kleisli}\;\Varid{qsort}))\;(\Varid{ls},\Varid{rs}){}\<[E]%
\\
\>[B]{}\hsindent{5}{}\<[5]%
\>[5]{}\Varid{ss}{}\<[16]%
\>[16]{}\leftarrow \Varid{runLift}\;{}\<[28]%
\>[28]{}(\Varid{rconvolute}\;(\Conid{Kleisli}\;(\Varid{rsplit}\;(\Varid{head}\;\Varid{xs})))\;\Varid{lift'}\;\Varid{lift'})\;(\Varid{return}\;(\Varid{ls'},\Varid{rs'})){}\<[E]%
\\
\>[B]{}\hsindent{5}{}\<[5]%
\>[5]{}\Varid{return}\;\Varid{ss}{}\<[E]%
\ColumnHook
\end{hscode}\resethooks

The functions \ensuremath{\Varid{lconvolute}} and \ensuremath{\Varid{rconvolute}} have \ensuremath{\Varid{id}} (from typeclass \ensuremath{\Conid{Category}}) function in the left and right similar as \ensuremath{\Varid{lmap}_{2}} and \ensuremath{\Varid{rmap}_{2}}. For example, \ensuremath{\Varid{lconvolute}\;\Varid{f}\mathrel{=}\Varid{convolute}\;\Varid{f}\;\Varid{id}} and the right convolution is similar changing the id order. It is also possible to observe the use of a plain monoidal profunctor by using its multiplication since \ensuremath{\Varid{qsort}} is a Kleisli arrow which has a trivial \ensuremath{\Conid{MonoPro}} instance (it is isomorphic to a SISO with \ensuremath{\Varid{f}\mathrel{=}\Conid{Identity}} and \ensuremath{\Varid{g}\mathrel{=}\Conid{MaybeT}\;(\Conid{Writer}\;[\mskip1.5mu \Conid{String}\mskip1.5mu])}). Finally, the expression \ensuremath{\Varid{lift'}} is \ensuremath{\Conid{Lift}\;\Varid{lift}}, that is, the monad transformer's \ensuremath{\Varid{lift}} in the monoidal profunctor setting.

\section{Related Work}
\label{sec:relatedwork}
Several of the results collected in this article are folklore. In the following, we point out the cases where a publication mentioning them is known to us.
Rivas and Jaskelioff introduce notions of computation as monoids~\cite{Jask2017}, and this work follows their path presenting another notion that is not in that work and that has not been well explored in the community. The type class \ensuremath{\Conid{MonoPro}} is defined by Pickering, Gibbons, and Wu \cite{Wu}, but it was not their main focus. The representation of Monocles is also found in the work of Pickering, Gibbons and Wu~\cite{Wu} which states the relationship between a \ensuremath{\Conid{Monocle}} and \ensuremath{\Conid{Traversals}} by using an optic called \ensuremath{\Conid{TraversalP}}. A similar representation of a Monocle was found in the work of O'Connor~\cite{connor} and is called \ensuremath{\Conid{FiniteGrate}}. This representation relies on a type class called \ensuremath{\Conid{Power}}, isomorphic to \ensuremath{\Conid{MonoPro}}, which deals with tuples that use type-level naturals. Also in O'Connor's work, there is a mention of the type \ensuremath{(\Conid{Functor}\;\Varid{f},\Conid{Applicative}\;\Varid{g})\Rightarrow \Varid{f}\;\Varid{a}\to \Varid{g}\;\Varid{b}} which is called \ensuremath{\Conid{SISO}} here. A profunctor optic called \ensuremath{\Conid{Traversables}} is defined by Rom\'an~\cite{roman} which is also similar to Monocles. However, in this work we study for the first time the monoidal profunctor semantics of the mentioned optics. 
The effectful monoidal profunctor and its examples are also new. 

\section{Conclusion}
\label{sec:conclusion}
We have studied monoidal profunctors as monoids in a monoidal category of profunctors, shown that such category is symmetric closed, and obtained the free monoidal profunctor. We have shown how to implement monoidal profunctors in Haskell and have shown some applications related to optics and a generalization to apply monoidal profunctors in a Kleisli category that we call effectful monoidal profunctors. We would like to investigate applications of this extension to parallel and concurrent algorithms.

%

\bibliographystyle{eptcs}
\bibliography{references}

\end{document}